\newcommand{\LR}{\mathcal{L}}
\newcommand{\N}{\mathbb{\N}}
\newcommand{\R}{\mathbb{R}}
\newtheorem{lemma}[theorem]{Lemma}
\newtheorem{definition}{Definition}
\newenvironment{proof}{{\noindent \it Proof.}}{\hfill $\blacksquare$\par}
\begin{document}

\begin{frontmatter}

\title{Integrated Private Data Trading Systems \\for Data Marketplaces}

\author[A,B]{\fnms{Weidong}~\snm{Li}\thanks{Corresponding Author. Email: wli916@aucklanduni.ac.nz}}
\author[A]{\fnms{Mengxiao}~\snm{Zhang}\thanks{Corresponding Author. Email: mengxiao.zhang@uestc.edu.cn}}
\author[B]{\fnms{Libo}~\snm{Zhang}} 
\author[B]{\fnms{Jiamou}~\snm{Liu}}

\address[A]{School of Computer Science and Engineering, University of Electronic Science and Technology of China, China}
\address[B]{School of Computer Science, The University of Auckland, New Zealand}

\begin{abstract}
In the digital age, data is a valuable commodity, and data marketplaces offer lucrative opportunities for data owners to monetize their private data. However, data privacy is a significant concern, and differential privacy has become a popular solution to address this issue. Private data trading systems (PDQS) facilitate the trade of private data by determining which data owners to purchase data from, the amount of privacy purchased, and providing specific aggregation statistics while protecting the privacy of data owners. However, existing PDQS with separated procurement and query processes are prone to  over-perturbation of private data and lack trustworthiness. To address this issue, this paper proposes a framework for PDQS with an integrated procurement and query process to avoid excessive perturbation of private data. We also present two instances of this framework, one based on a greedy approach and another based on a neural network. Our experimental results show that both of our mechanisms outperformed the separately conducted procurement and query mechanism under the same budget regarding accuracy.
\end{abstract}

\end{frontmatter}

\section{Introduction}\label{sec:intro}


In the modern digital era, data have become a highly valuable asset, providing opportunities to gain insights, make informed decisions, and drive innovation across various domains, leading to significant economic benefits for individuals, businesses, and governments \cite{manyika2011big}. 
However, this abundance of data also brings about a pressing concern: data privacy leakage. With recent data breaches affecting millions of $533$ Facebook users\footnote{\url{https://www.bbc.com/news/technology-56815478}}, $150,000$ NHS patients\footnote{\url{https://www.bbc.com/news/technology-44682369}} and fitness tracking app Strava users\footnote{\url{https://www.theguardian.com/world/2018/jan/28/fitness-tracking-app-gives-away-location-of-secret-us-army-bases}}, privacy preservation has become increasingly critical. This creates an inherent conflict between preserving privacy for individuals and unlocking the significant economic value for society.  {\em Data marketplaces} have emerged as a potential solution to this issue, as they compensate {\em data owners} for their private data to offset the potential privacy loss. By providing compensation, data owners may be more willing to share their data, allowing it to be further utilised \cite{zhang2022experimental}.



Imagine a scenario where a data analyst, also referred to as a {\em data consumer}, wishes to obtain aggregation statistics from individual data owners through a data marketplace. These data owners are willing to share their private data on the data marketplace as long as they receive fair compensation. 
Each data owner has a specific compensation amount in mind, known as their {\em valuation}, which they keep to themselves. Additionally, the data owners require their privacy to be safeguarded to a certain extent. To facilitate this transaction, a {\em data broker} acts as an intermediary between the data owners and data consumers on behalf of the data marketplace, procuring data from data owners and analysing the collected data for data consumers.

A {\em private data trading system} (PDQS) \cite{zhang2023smartauction} is a mechanism that enables the trade of private data, 
determining how much to compensate the data owners, generating specific aggregation statistics, and ensuring the privacy protection of the data owners.  
%
Starting with Ghosh and Roth's work \cite{ghosh2011selling}, a typical PDQS consists of two main components: a {\em procurement process} determines the selection of data owners from whom data will be purchased and the payment for the selected data owners; a {\em query process} executes a query on the procured data, adds noise to the data, and outputs query results.



{\em Differential privacy} (DP) \cite{dwork2006calibrating}  is a privacy concept that can be employed to measure the degree of privacy loss for each data owner, indicating the amount of privacy compromised during the data trading. A typical method to achieve DP is to execute the query on the {\em raw} datasets and add noise to the true query answer. In this case, the data broker, who executes the query, is deemed to be trustworthy and has the full access to the raw data. However, in practice, the data broker may not be trustworthy: once obtaining the raw data, the data broker may resell the data to third parties for profit. 
Given that, a local model of DP, known as {\em local differential privacy} (LDP) \cite{duchi2013local} is proposed. In LDP model, each data owner adds noise to her private data locally before sharing it to the data broker. Most of the existing PDQS deploys the DP model \cite{ghosh2011selling,dandekar2014privacy,cummings2022optimal,zhang2023survey}. We instead extend the PDQS design to the LDP model. 

Existing works for implementing PDQS execute the procurement and the query processes one after another \cite{fallah2022optimal}. Specifically, the procurement process first selects a subset of data owners. Then to achieve LDP, the query process runs a {\em local randomiser} \cite{erlingsson2014rappor} where a smaller subset of data owners are selected randomly and submit their true data while each of the others submit a random number. In this way, the data owners who submit random numbers are paid with no contribution to the query in the query process.
In other words, the budget is not put to good use and the query accuracy could be improved. Therefore, a question arises: {\em How to design a PDQS with an integrated procurement and query process to address this issue? }

A challenge arises when designing such an integrated procurement and query process. The procurement process and the query process serve for different purposes and thus are expected to have different properties: the procurement process is used to incentivise the data owners to reveal their valuations for pricing purposes and it should satisfy incentive compatibility (IC), individual rationality (IR) and budget feasibility (BF) properties (see formal definitions in Sec.~\ref{sec:problem}), while the query process should satisfy LDP. Therefore, we need to design proper rules of the integrated process that meets all of these properties and thus can serve well for both procurement and query processes. 

To address the research question, we propose a new framework of the PDQS, named {\em Integrated PDQS}. The core of this framework is embedding the procurement process into the query process such that data owner selection and data perturbation happen simultaneously. To be specific, in the new framework, we first assign an allocation probability that a data owner would be procured, then we use this probability to obfuscate the data in the local randomiser. In this way, all selected data owners, once being paid, contribute to the query answer. See details in Sec.~\ref{sec:framework}. 

Then we propose two PDQS instantiations that implement the proposed framework. The first mechanism, {\bf g}reedy {\bf p}rivate {\bf q}uery {\bf m}echanism (GPQM), selects data owners with lower valuation in a greedy manner until the budget is exhausted. This mechanism allows us to use any non-increasing function as the allocation probability, but requires manual selection of the function and its parameters. Our second mechanism, {\bf n}eural-based {\bf p}rivate {\bf q}uery {\bf m}echanism (NPQM), parameterises the allocation probability with a neural network. The parameters are learned with the dual-ascent algorithm. We theoretically proved that both GPQM and NPQM satisfy desirable properties. See details in Sec.~\ref{sec: GPQM} and \ref{sec: NPQM}.  
We also empirically validate the performance of the propose two PDQS instances. Experimental results show that both of our proposed methods outperform the benchmarks, with NPQM performing better than GPQM.

The contributions of this paper are summarised as follows:
\begin{itemize}
    \item We propose a new framework of PDQS that integrates the procurement and query process. 
    \item We design two PDQS instances to implement the propose framework: GPQM is equipped with a non-decreasing function as the allocation probability, and NPQM uses neural network to learn an optimal allocation probability.
    \item We empirically show the strengths of the two proposed PDQSs. 
\end{itemize}

\section{Related works}


In their seminal work, Ghosh and Roth \cite{ghosh2011selling} formalised the concept of trading private data under differential privacy with their proposed mechanism, FairQuery, which includes independent procurement and query processes. In the procurement process, a subset of data owners is selected based on their bids, and their private data is reported to the system without perturbation. In the query process, random noise is added to the query answer based on the collected dataset to guarantee DP. This leads to FairQuery and its extensions e.g. FairInnerProduct~\cite{dandekar2014privacy} heavily relying on a trusted data broker.

To tackle the distrust between data owners and the data broker, Wang et al.~\cite{wang2016buying} and Fallah et al.~\cite{fallah2022optimal} proposed PDQSs by enabling data owners to perturb their private data locally to ensure local DP. 
However, they do not consider the budget constraints of the data consumer. Additionally, they rely on separated procurement and query processes, leading to excessive perturbation of private data.

There are other works that extended Ghosh and Roth's work from different perspectives, e.g., considering the correlations between data owners' values and valuations \cite{roth2012conducting, fleischer2012approximately, ligett2012take, nissim2014redrawing, chen2018optimal, chen2019prior}, the cases that data owners' private data is not verifiable \cite{ghosh2014buying}, the scenario where different levels of data accuracy are provided by various data brokers \cite{cummings2015accuracy}, the network effect on data owners' participation \cite{zhang2016privacy, jia2023incentivising}, single-minded data owners \cite{zhang2020selling}, data owners getting benefits from the statistic based on reported private data \cite{cummings2022optimal, fallah2022optimal}.
There are also studies working on the design of truthful mechanisms for trading data without preserving data under differential privacy \cite{cai2015optimum, liu2016learning}, privacy-aware mechanisms that preserve user bids \cite{mcsherry2007mechanism, nissim2012privacy, abernethy2019learning, lei2020privacy}, pricing mechanisms that charging users based on their perturbed private data \cite{chen2022differential}, and pricing strategies based on data quality \cite{yu2017data}.

To the best of our knowledge, there is no existing PDQS addressing our problem of trading private data under LDP while integrating procurement and query processes and ensuring IC, IR and BF properties (formally defined in Sec.~\ref{sec:problem}). Thus, we propose our framework to solve this problem and two instantiations of the framework.

\section{Problem formulation}\label{sec:problem}
Consider a data transaction with a data consumer and $n$ data owners. The data consumer aims to obtain some aggregation information about the data owners, e.g., how many people are infected in the population, which can be denoted as a {\em query} $f$. The data consumer has a budget, denoted by $\beta \in \R_+$, for the query. 


Each data owner $i$ has private data $t_i \in \{0, 1\}$. The data owner is willing to sell her private data to the data consumer, given reasonable compensation and privacy protection. Let $\varepsilon_i \in \R_+$ be a {\em privacy parameter}. 
When her data is used in an $\varepsilon_i$-differential privacy manner, she suffers a privacy cost $c_i \coloneqq \varepsilon_i \theta_i$, where $\theta_i \in [\underline{\theta}, \overline{\theta}]$ is her {\em valuation} to a unit of privacy. A data owner can be represented by a tuple $s_i \coloneqq (t_i, \theta_i)$. We use $\vec{s} \coloneqq (s_1, \ldots, s_n)$ to denote the data owners, $\vec{t}\coloneqq (t_1, \ldots, t_n)$ and $\vec{\theta} \coloneqq (\theta_1, \ldots, \theta_n)$ to denote the private data and the valuation vector of all $n$ data owners, respectively. Also, we use $\vec{\theta}_{-i} \coloneqq (\theta_1, \ldots, \theta_{i-1},\theta_{i+1},\ldots,\theta_n)$ to denote the valuation vector of all data owners but $s_i$. Data owners may misreport their valuations to, for example, gain more compensation. Let $b_i \in [\underline{\theta}, \overline{\theta}]$ denote the reported valuation of data owner $s_i$. Similarly, we have vectors $\vec{b} \coloneqq (b_1, \ldots, b_n)$ and $\vec{b}_{-i} \coloneqq (b_1, \ldots, b_{i-1},b_{i+1},\ldots, b_n)$.

We are committed to developing a Private Data Query System (PDQS) that takes the set of data owners and the data consumer's budget as inputs. The PDQS consists of four components: an \textit{allocation function} that determines the selection probabilities of data owners, a \textit{payment function} that calculates the compensation for each data owner, an \textit{LDP algorithm} that ensures privacy by adding noise locally, and a \textit{query function} that implements the query $f$ based on the collected data. Through the collaboration of these components, the PDQS efficiently processes queries while ensuring the privacy of data owners. 

Next, we formally define these functions and outline the desired properties they should possess.




\begin{definition}
    An {\em allocation function} $q: [\underline{\theta}, \overline{\theta}]^n \to [0,1]^n$ is a mapping from the bids of all data owners to an allocation result $\vec{q}\coloneqq (q_1, \ldots, q_n)$, where $q_i$ is the probability that data owner $s_i$ is selected.  
\end{definition}
We also write $q_i(\vec{b})$ as the allocation function of an individual data owner $s_i$. When the context is clear, we write $q_i$ for short.

\begin{definition}
    A {\em payment function} $p: [\underline{\theta}, \overline{\theta}]^n \to \mathbb{R_+}^n$ is a mapping from the bids of all data owners, represented by payment vector $\vec{p}\coloneqq (p_1, \ldots, p_n)$, where $p_i$ is the compensation of data owner $s_i$.
\end{definition}

Note that when the data owners make their decisions on bidding, they have no idea about whether they would be selected or not. We assume that the data owners are rational and make decisions based on the expected benefit. 
Let $P_i=p_i q_i$ be the expected payment of $s_i$. The expected utility of a data owner $s_i$ is the difference between her expected compensation and the valuation of expected privacy loss, i.e., $u_i \coloneqq P_i - \varepsilon_i \theta_i q_i$.

The allocation function together with the payment function is expected to satisfy the following properties.

\begin{itemize}
\item {\noindent \bf Incentive compatibility (IC).}
Each data owner $s_i, 1\leq i \leq n$ maximises her expected utility when reporting true valuation, i.e., 
\begin{equation}
   u_i(\theta_i,\vec{b}_{-i}) \geq u_i(b_i,\vec{b}_{-i}) \text{  } \forall b_i \neq \theta_i, \forall \vec{b}_{-i}\in [\underline{\theta}, \overline{\theta}]^{n-1}. 
   \label{eqn:IC}
\end{equation}
\item {\noindent \bf Individual rationality (IR).} 
Each data owner $s_i, 1\leq i \leq n$ gets  non-negative expected utility when reporting true valuation, i.e., 
\begin{equation}
    u_i(\theta_i,\vec{b}_{-i}) \geq 0 \text{  } \forall \vec{b}_{-i} \in [\underline{\theta}, \overline{\theta}]^{n-1}.
    \label{eqn:IR}
\end{equation}
\item {\noindent \bf Budget feasibility (BF).} The total expected payment is within the budget $\beta$ specified by the data consumer, i.e., 
\begin{equation}
    \sum_{i=1}^n p_i(\vec{b}) q_i(\vec{b}) \leq \beta.
    \label{eqn:BF}
\end{equation}
\end{itemize}

\noindent Intuitively, IC ensures the data owners are incentivised to report their true valuations as doing so leads to the best utility while IR ensures that the data owners are willing to participate in the system as it leads to at least non-negative utility. 


\begin{definition}
    A {\em query} $f: \mathbb{R}^n \to \mathbb{R}$ is a mapping from private data of $n$ data owners to a real value.
\end{definition}

Commonly used queries include count and median queries. Count query counts the number of $1$s in a binary dataset, i.e., $f(\vec{t}) = \sum^n_{i=1} t_i$, while median query returns the median number in a real-valued dataset. 


Considering the privacy concern, we use local differential privacy (LDP)~\cite{duchi2013local} as the privacy concept. In order to ensure LDP, a {\em local randomiser} is commonly utilised, allowing data owners to obfuscate their private data before sharing it for answering queries.

Let $\mathcal{D}$ denote the domain of private data and $\Omega$ be a probability space. 
We define the local randomiser and LDP as follows.


\begin{definition}
A {\em local randomiser} $\LR: \mathcal{D} \times \Omega \to \mathcal{D}$ is a randomised function mapping a private value $t_i$ to a random value $t_i'\in \mathcal{D}$.
\end{definition}

\begin{definition}[
\cite{duchi2013local}] A local randomiser $\LR$ is {\em $\varepsilon_i$-local differentially private}, if for any pair of input $t_i, t_i' \in \mathcal{D}$ and for any possible output $o \in Range(\LR)$, we have 
\[
\Pr[\LR(t_i) = o] \leq e^{\varepsilon_i} \Pr[\LR(t_i') = o],
\]
\end{definition}
\noindent where $\varepsilon_i$ is a non-negative real number to measure data owner $s_i$'s privacy loss.

For a given query $f$ and PDQS $M$, 
we use $(\alpha,\delta)$-probably approximately correct  ($(\alpha,\delta)$-PAC) to measure $M$'s accuracy. 


\begin{definition}
For $\alpha,\delta\in [0,1]$, a private data query system $M$ is {\em $(\alpha,\delta)$-probably approximately correct}  ($(\alpha,\delta)$-PAC) if for any dataset $\vec{t} = (t_i, \ldots, t_n)$,
$$\Pr[|M(\vec{t})-f(\vec{t})|\geq \alpha] \leq 1-\delta,$$
\end{definition}
\noindent where $|M(\vec{t})-f(\vec{t})|$ is the difference between the query answer derived by $M$ and the true answer.

\bigskip

The goal of this research is to develop a PDQS that meets IC, IR, BF, and $\varepsilon_i$-LDP properties, while also approximating query accuracy. 

\section{Proposed framework}
\label{sec:framework}






We propose the framework \textit{Integrated PDQS}, which combines the procurement and query processes into an integrated system. Integrated PDQS receives bids, private data, the budget, the query and an allocation function as inputs, and provides a payment result and a query answer as outputs. This framework can be instantiated to create PDQSs that satisfy the desired properties, i.e., IC, IR, BF and $\varepsilon_i$-LDP. The workflow of the framework is shown in Figure~\ref{fig:framework}. 

Unlike existing PDQSs that separate the procurement and query processes, the Integrated PDQS assigns probabilities to each data owner. These probabilities determine both the likelihood of being selected by the system and the likelihood of reporting their true private data. 
Consequently, the consumer's budget is utilised more efficiently, leading to higher query accuracy.\footnote{Note that the randomised approach, under the premise of privacy protection, achieves the goal of compensating only those owners who report true data, which deterministic procurement mechanisms cannot accomplish.}



\begin{figure}
    \centering
    \includegraphics[width=0.8\columnwidth]{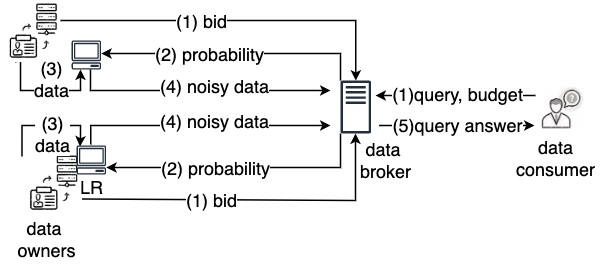}
    \caption{Workflow of proposed Integrated PDQS framework}
    \label{fig:framework}
\end{figure}

Algorithm~\ref{alg:framework} outlines the specific steps of the Integrated PDQS framework. Initially, the framework calculates the allocation probabilities and expected payments for each data owner (Lines 1-2 of Alg.~\ref{alg:framework}). Unlike existing approaches that select data owners before the query process, the Integrated PDQS framework incorporates an {\em Integrated Local Randomiser} (ILR) $\mathcal{L}_I$ for each data owner  (Lines 3-5 of Alg.~\ref{alg:framework}). As shown in Alg.~\ref{alg:LR}, $\LR_I$ is employed to select a data owner based on her probability $q_i$. If a data owner is chosen, her true private data is reported, and her compensation is calculated as $P_i/q_i$. On the other hand, if a data owner is not selected, a random value is reported, and no compensation is provided. Subsequently, the framework performs the query on the noisy dataset obtained from the data owners (Line 6 of Alg.~\ref{alg:framework}).

In the framework, the procurement process is embedded in the query process through the use of ILR. In contrast to the traditional local randomiser, ILR not only determines the reported private data but also determines the compensation for each data owner. This design ensures that only the data owners who disclose their true data are selected and receive payment.

\begin{algorithm}
\begin{algorithmic}[1]
\caption{The Integrated PDQS framework}
\label{alg:framework} \footnotesize
\Require Data owners $\vec{s}$, budget $\beta$, query $f$, allocation function $q$
\Ensure 
Payment vector $(p_1, \ldots, p_n)$, query answer $z$
\State Generate the allocation vector $(q_1, \ldots, q_n)$  
\State Generate the payment vector $(P_1, \ldots, P_n)$
\For{each data owner $s_i$}
\State Get a random value and the payment $t'_i, p_i = \mathcal{L}_I(t_i, q_i, P_i)$
\EndFor
\State Compute the query answer $z = f(t'_1, \ldots, t'_n)$
\State Return $(p_1, \ldots, p_n), z$ 
\end{algorithmic}
\end{algorithm}

\begin{algorithm}
\caption{Ingerated Local Randomiser $\mathcal{L}_{I}$}
\label{alg:LR} \footnotesize
\begin{algorithmic}[1]
\Require Private data $t_i$, probability $q_i$, expected payment $P_i$
\Ensure Perturbed data entry $\hat{t}_i$ 
\State With probability $q_i$, set  $\hat{t}_i = t_i,  p_i = P_i / q_i$; with probability $1-q_i$, set $\hat{t}_i$ to be a random value $t_i' \in \mathcal{D}, p_i = 0$
\State Return $\hat{t}_i, p_i$
\end{algorithmic}
\end{algorithm}


Applying the ILR with probability $q_i$ for data owner $s_i$ ensures $\ln{\frac{1+q_i}{1-q_i}}$-local differential privacy, as shown in the following lemma.


\begin{lemma}
\label{lemma:lr}
    The Integrated Local Randomiser $\mathcal{L}_I$  with probability $q_i$ is $\varepsilon_i$-local differential privacy, where $\varepsilon_i= \ln{\frac{1+q_i}{1-q_i}}$. 
\end{lemma}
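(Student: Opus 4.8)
The plan is to compute explicitly the output distribution of $\mathcal{L}_I$ on each admissible input and then bound the pointwise likelihood ratio appearing in the definition of LDP. First I would pin down the interpretation of the ``random value'' in Algorithm~\ref{alg:LR}: since the private data lie in $\mathcal{D}=\{0,1\}$, the non-selected branch reports a value drawn uniformly from $\{0,1\}$. With this, for a data owner with allocation probability $q_i$ and true bit $t_i$, the perturbed output $\hat t_i$ equals $t_i$ with probability $q_i + (1-q_i)\cdot\tfrac12 = \tfrac{1+q_i}{2}$ (the selected branch always, plus the random branch half the time) and equals $1-t_i$ with probability $(1-q_i)\cdot\tfrac12 = \tfrac{1-q_i}{2}$.

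Next I would verify the LDP inequality directly. Take any two inputs $t_i,t_i'\in\{0,1\}$ and any output $o\in\Range(\mathcal{L}_I)=\{0,1\}$. If $t_i=t_i'$ the two probabilities coincide and the ratio is $1$. If $t_i\neq t_i'$, then $o$ agrees with exactly one of them, so $\Pr[\mathcal{L}_I(t_i)=o]/\Pr[\mathcal{L}_I(t_i')=o]$ equals either $\tfrac{(1+q_i)/2}{(1-q_i)/2}=\tfrac{1+q_i}{1-q_i}$ or its reciprocal $\tfrac{1-q_i}{1+q_i}\le 1$. In every case the ratio is at most $\tfrac{1+q_i}{1-q_i}=e^{\varepsilon_i}$ for $\varepsilon_i=\ln\tfrac{1+q_i}{1-q_i}$, which is exactly the condition $\Pr[\mathcal{L}_I(t_i)=o]\le e^{\varepsilon_i}\Pr[\mathcal{L}_I(t_i')=o]$ required for $\varepsilon_i$-local differential privacy.

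Since the argument reduces to a small finite case check, there is no substantive obstacle; the only points needing care are (i) stating which distribution the random branch uses — the value $\ln\tfrac{1+q_i}{1-q_i}$ is precisely what a uniform choice over a binary domain yields, whereas a non-uniform choice or a larger domain would weaken the bound — and (ii) the boundary behaviour, namely $q_i=1$ making $\mathcal{L}_I$ deterministic with $\varepsilon_i=\ln\tfrac{2}{0}=\infty$ (the usual convention that a non-private release has unbounded privacy loss), and $q_i=0$ making $\mathcal{L}_I$ independent of the input with $\varepsilon_i=0$. I would close by noting that the bound is tight, attained by $(t_i,t_i')=(o,1-o)$, so $\ln\tfrac{1+q_i}{1-q_i}$ is in fact the exact privacy level of $\mathcal{L}_I$ rather than merely an upper bound.
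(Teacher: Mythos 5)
Your proposal is correct and follows essentially the same route as the paper's own proof: both compute the two output probabilities $\tfrac{1+q_i}{2}$ and $\tfrac{1-q_i}{2}$ under the assumption of a uniform random value in the non-selected branch, and bound the likelihood ratio by $\tfrac{1+q_i}{1-q_i}=e^{\varepsilon_i}$ in each of the two cases. Your additional remarks on the boundary cases $q_i\in\{0,1\}$ and the tightness of the bound go slightly beyond the paper's proof but do not change the argument.
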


\begin{proof}
Given two data entries $t_i,t_i'$, and $t_i \neq t_i'$, 
for $r\in \{0,1\}$: 

\noindent when $r=t_i$, 
\[
\frac{\Pr[\mathcal{L}_I(t_i)=r]}{\Pr[\mathcal{L}_I(t_i')=r]}=\frac{q_i + \frac{1}{2}(1-q_i)}{\frac{1}{2} (1-q_i)} = \frac{1+q_i}{1-q_i}\leq e^{\varepsilon_i};
\]

\noindent when $r=t_i'$, 
\[
\frac{\Pr[\mathcal{L}_I(t_i)=r]}{\Pr[\mathcal{L}_I(t_i')=r]}=\frac{\frac{1}{2} (1-q_i)}{q_i + \frac{1}{2}(1-q_i)} = \frac{1-q_i}{1+q_i}  \leq e^{\varepsilon_i}.
\]
\end{proof}

Note that when $q_i$ approaches $1$, the privacy guarantee $\varepsilon_i=\ln{\frac{1+q_i}{1-q_i}}$ is meaningless as it is infinity. Hence, from now on, we require the allocation probability $q_i\in [0,1)$.

\medskip


Our goal is to design PDQSs that implement the Integrated PDQS framework, and satisfy IC, IR, BF and $\varepsilon_i$-LDP while approximating query accuracy. Any PDQS that instantiates the proposed framework employs ILR, so $\varepsilon_i$-LDP is guaranteed. 

For query accuracy, we introduced the accuracy notion of $(\alpha, \delta)$-PAC in Section~\ref{sec:problem}. To achieve high accuracy, we adopt the principle of Purchased Privacy Expectation Maximisation (PPEM) proposed in \cite{zhang2020selling}. The PPEM principle highlights that acquiring a certain amount of expected privacy is a necessary condition for attaining a particular level of PAC accuracy. Thus, we may maximise the total expected purchased privacy to approximate the query accuracy.
According to Lemma~\ref{lemma:lr}, the PDQS that adopts the proposed framework purchases the privacy of  $\ln{\frac{1+q_i}{1-q_i}}$ with a probability $q_i$ from data owner $s_i$. As a result, the total expected purchased privacy in the PDQS is given by $\sum^n_{i=1} q_i \ln{\frac{1+q_i}{1-q_i}}$.

Then, we characterise IC and IR properties. We define $w_i(b_i, b_{-i}) \coloneqq q_i(b_i, b_{-i}) \ln{\frac{1+q_i(b_i, b_{-i})}{1-q_i(b_i, b_{-i})}}$ as the expected privacy loss of data owner $s_i$, when she bids $b_i$ and others bid $b_{-i}$. For brevity, we refer to $w_i(b_i, b_{-i})$ as $w_i$ when the context is clear.
Applying Archer and Tardos's theorem~\cite{archer2001truthful} and considering $w_i$ as the amount of \textit{load} assigned to each data owner, we can derive the following theorem:


\begin{theorem}[\cite{archer2001truthful}]
\label{theorem:Archers}
A PDQS $M$ satisfies IC and IR if and only if 
\begin{enumerate}
    \item the expected privacy loss $w_i(b_i, b_{-i})$ does not increase with respect to the bid $b_i$,
    \item the expected privacy loss $w_i(b_i, b_{-i})$ satisfies $\int^{\overline{\theta}}_0 w_i(x, b_{-i}) dx < \infty$ for all $i, b_{-i}$, and
    \item the expected payment is in the form of $P_i = b_i w_i(b_i, b_{-i}) + \int^{\overline{\theta}}_{b_i} w_i(x, b_{-i}) dx$.  
\end{enumerate}
\end{theorem}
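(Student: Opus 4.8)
This is the standard Myerson-style characterization argument, specialized to procurement (reverse) auctions as in Archer--Tardos. The quantity $w_i(b_i,b_{-i})$ plays the role of the ``work'' or ``load'' assigned to agent $i$, and $\theta_i$ is her per-unit cost, so her true cost is $\theta_i w_i$ and her utility when bidding $b_i$ is $u_i(b_i,b_{-i}) = P_i(b_i,b_{-i}) - \theta_i\, w_i(b_i,b_{-i})$. I would prove the two directions separately.

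\emph{Necessity (IC+IR $\Rightarrow$ 1--3).} Fix $b_{-i}$ and write $\phi(b_i) := u_i(b_i,b_{-i})$ evaluated at true valuation $\theta_i$, i.e. the utility of a $\theta_i$-type who reports $b_i$. IC says $\theta_i$ is a maximizer of $P_i(\cdot,b_{-i}) - \theta_i w_i(\cdot,b_{-i})$. Taking two types $\theta < \theta'$ and adding the two IC inequalities (one for each type, each preferring its own report) yields $(\theta'-\theta)\big(w_i(\theta,b_{-i}) - w_i(\theta',b_{-i})\big) \ge 0$, which gives monotonicity (item 1). Item 2 is a mild regularity/integrability condition needed so the integral in item 3 is well-defined; it follows because a monotone function on a bounded interval is integrable, and the payment formula forces finiteness. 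For item 3, the envelope theorem (or the classic Myerson integration argument: $\frac{d}{d\theta} u_i(\theta,b_{-i}) = -w_i(\theta,b_{-i})$ wherever differentiable, then integrate from $b_i$ up to $\overline\theta$ using $u_i(\overline\theta,b_{-i}) = 0$, which is where IR binds at the top type) gives exactly $P_i = b_i w_i(b_i,b_{-i}) + \int_{b_i}^{\overline\theta} w_i(x,b_{-i})\,dx$.

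\emph{Sufficiency (1--3 $\Rightarrow$ IC+IR).} Assume $w_i$ is nonincreasing in $b_i$ and the payment has the stated form. Plug the formula into $u_i(b_i,b_{-i}) = P_i(b_i,b_{-i}) - \theta_i w_i(b_i,b_{-i})$ with true type $\theta_i$: the deviation gain $u_i(b_i,b_{-i}) - u_i(\theta_i,b_{-i})$ collapses, after substituting the two payment expressions and cancelling, to $(b_i - \theta_i) w_i(b_i,b_{-i}) - \int_{\theta_i}^{b_i} w_i(x,b_{-i})\,dx$, which is $\int_{\theta_i}^{b_i}\big(w_i(b_i,b_{-i}) - w_i(x,b_{-i})\big)\,dx \le 0$ by monotonicity, regardless of the sign of $b_i - \theta_i$ — this is the IC inequality. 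For IR, evaluate $u_i(\theta_i,b_{-i}) = \int_{\theta_i}^{\overline\theta} w_i(x,b_{-i})\,dx \ge 0$ since $w_i \ge 0$.

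\emph{Main obstacle.} The argument is essentially a citation of Archer--Tardos, so the only real work is the translation: verifying that our model genuinely fits their frame — in particular that $w_i$ is a legitimate ``load'' (nonnegative, and a deterministic function of the bid profile via the allocation function $q$), that the data owner's disutility is linear in $w_i$ with slope $\theta_i$ (which holds since $w_i = q_i \ln\frac{1+q_i}{1-q_i}$ is exactly the expected privacy consumed and $c_i = \varepsilon_i\theta_i$ is linear in $\varepsilon_i$), and that bids lie in a bounded interval $[\underline\theta,\overline\theta]$ so the integrals are finite. I would state these correspondences explicitly and then invoke the theorem; I do not expect any genuinely new technical difficulty beyond bookkeeping, and I would not reproduce the full Myerson integration in the paper.
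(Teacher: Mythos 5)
The paper does not prove this statement at all: it is imported verbatim as a citation of Archer and Tardos, so there is no in-paper argument to compare against. Your reconstruction is the standard Myerson/Archer--Tardos one-parameter characterization, correctly translated to this setting (the load is $w_i = q_i\varepsilon_i$, the utility $u_i = P_i - \theta_i w_i$ is linear in the load with slope $\theta_i$), and both directions as you sketch them are sound. One small caveat worth flagging if you were to write this out: IC and IR alone only pin the payment down to $P_i = h_i(b_{-i}) + b_i w_i(b_i,b_{-i}) + \int_{b_i}^{\overline\theta} w_i(x,b_{-i})\,dx$ with $h_i \ge 0$; the exact form in item 3 is the minimal (normalized) choice obtained by forcing $u_i(\overline\theta,b_{-i})=0$, a normalization the theorem statement (and the original reference) leaves implicit, so the literal ``only if'' for item 3 requires that normalization rather than following from IR. This imprecision is inherited from the paper's statement, not introduced by you.
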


Therefore we construct the optimisation problem that maximises the total expected purchased privacy while satisfying the IC, IR and BF constraints.

As the privacy loss $\varepsilon_i = \ln{\frac{1+q_i}{1-q_i}}$ of data owner $s_i$ when she is selected with probability $q_i$ increases with respect to $q_i$ in the domain of $[0,1)$, we can observe that Condition 1 of Theorem~\ref{theorem:Archers} is satisfied if the allocation function $q$ is non-increasing. Therefore, in the subsequent discussion, we impose the requirement of a non-increasing allocation function to fulfil Condition 1.


\begin{equation}\label{eqn:opt2}
\begin{aligned}
    &\max & &\sum^n_{i=1} q_i \ln{\frac{1+q_i}{1-q_i}}  \\
    &\text{s.t. } & &q_i(b_i', b_{-i}) \leq q_i(b_i, b_{-i}),  \forall  b_i' > b_i \\
    & & & \sum_{i=1}^n [b_i w_i + \int^{\overline{\theta}}_{b_i} w_i(x, b_{-i}) dx] \leq \beta \\
    & & &0 \leq q_i < 1,  \forall i \in [0, n] 
\end{aligned}
\end{equation}


Now the problem is to find the proper allocation function $q$ to solve \eqref{eqn:opt2}. In the following sections, we propose two  instances of the proposed framework to solve the problem.

\section{Greedy Private Query Mechanism (GPQM)}\label{sec: GPQM}

The core of designing an instance to implement the Integrated PDQS framework is the design of an allocation function $q$ that addresses \ref{eqn:opt2}. A straightforward idea is to deploy a non-increasing function as the allocation function such that a data owner with a high valuation has a low chance to be selected, and then, according to the distribution, greedily choose data owners until the budget is used up. Such non-increasing allocation function can be a linear function e.g. $q_i=1-b_i$, a logarithmic function e.g. $q_i = -\log(b_i)$, or an exponential function e.g. $q_i = e^{-b_i}$. We refer to the greedy instance as {\em {\bf g}reedy {\bf p}rivate {\bf q}uery {\bf m}echanism} (GPQM).

To be specific, given a non-increasing allocation function, GPQM first determines the allocation probability $q_i$. Also, for each data owner $s_i$ whose expected privacy loss is $w_i = q_i \ln{\frac{1+q_i}{1-q_i}}$ 
, set her expected payment as $P_i = b_i w_i + \int^{\overline{\theta}}_{b_i}w_i(x, b_{-i})dx$.
Then GPQM sorts the data owners by their allocation probabilities in descending order. Following the order, each data owner runs the ILR $\LR_I$ until the budget is used up. GPQM finally returns the payment vector and the query answer. See Algorithm~\ref{alg:GPQM}.



\begin{algorithm}
\begin{algorithmic}[1]
\caption{Greedy Private Query Mechanism (GPQM)} \footnotesize
\label{alg:GPQM}
\Require Data owners $\vec{s}$, budget $\beta$, query $f$, non-increasing allocation function $q$
\Ensure 
Payment vector $(p_1, \ldots, p_n)$, query answer $z$
\State Generate the allocation vector $(q_1, \ldots, q_n) = q(\vec{b})$ 
\State Compute the expected payment vector $(P_1, \ldots, P_n)$, where $P_i = b_i w_i + \int^{\overline{\theta}}_{b_i}w_i(x, b_{-i})dx$
\State Sort the data owners in descending order with respect to $q_i$
\State Initialise $k=1$
\While{$\sum_{i=1}^k P_i \leq \beta$}
\State Get a random value and the payment $t'_i, p_i = \mathcal{L}_I(t_i, q_i, P_i)$
\State Increment $k=k+1$
\EndWhile
\State Set $p_i = 0$, if $i > k$
\State Compute the query answer $z = f(t'_1, \ldots, t'_n)$
\State Return $(p_1, \ldots, p_n), z$
\end{algorithmic}
\end{algorithm}
%
%
%


\begin{lemma}
    Greedy private query mechanism (GPQM) is IC and IR.
    \label{lem:GPQM_IC_IR}
\end{lemma}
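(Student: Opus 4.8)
The plan is to obtain IC and IR from the Archer--Tardos characterisation (Theorem~\ref{theorem:Archers}), reading $w_i$ as the load carried by $s_i$. So I would verify that the allocation/payment pair that GPQM actually induces satisfies the three conditions of Theorem~\ref{theorem:Archers}: (i) $w_i(b_i,b_{-i})$ is non-increasing in $b_i$; (ii) $\int_0^{\overline{\theta}} w_i(x,b_{-i})\,dx<\infty$ for every $i,b_{-i}$; and (iii) the expected payment equals $b_i\,w_i(b_i,b_{-i})+\int_{b_i}^{\overline{\theta}} w_i(x,b_{-i})\,dx$. Since (i)--(iii) are jointly equivalent to IC $\wedge$ IR, establishing them proves the lemma.

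For (i): GPQM receives a non-increasing allocation function, so $q_i$ is non-increasing in $b_i$; and $q\mapsto q\ln\frac{1+q}{1-q}$ is increasing on $[0,1)$ because its derivative $\ln\frac{1+q}{1-q}+\frac{2q}{1-q^{2}}$ is non-negative there, so the composite $w_i=q_i\ln\frac{1+q_i}{1-q_i}$ is non-increasing in $b_i$ --- this is exactly the monotonicity constraint listed in \eqref{eqn:opt2}. For (ii): on the bid domain $w_i$ is bounded wherever $q$ is bounded away from $1$ and has at worst a logarithmic singularity where $q\to1$, and $\int\ln\frac{1}{1-q}\,dq$ converges, so the integral is finite. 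For (iii): Line~2 of Algorithm~\ref{alg:GPQM} sets the \emph{expected} payment $P_i$ to the stated quantity, and the Integrated Local Randomiser realises it --- paying $P_i/q_i$ with probability $q_i$ and $0$ otherwise gives expected payment $q_i\cdot(P_i/q_i)=P_i$, conditional on $s_i$ being reached before the budget is spent.

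The step that needs the most care, and which I expect to be the main obstacle, is the greedy budget truncation: an owner $s_i$ not reached before $\sum_j P_j$ overruns $\beta$ gets $p_i=0$ and, reporting a value independent of $t_i$, suffers no privacy loss, so her realised pair $(w_i,P_i)$ collapses to $(0,0)$ rather than the nominal one. IR survives this, since at her truthful bid $s_i$ either is selected, with utility $\int_{\theta_i}^{\overline{\theta}} w_i(x,b_{-i})\,dx\ge0$, or is not, with utility $0$. For IC the genuinely delicate case is an owner who is unselected at $\theta_i$ but could be pulled into the selected prefix by \emph{under}bidding, in which case the deviation could a priori net the positive amount $(b_i-\theta_i)\,w_i(b_i,b_{-i})+\int_{b_i}^{\overline{\theta}} w_i(x,b_{-i})\,dx$; I would try to rule this out by showing that the truncated load $\widehat w_i=w_i\cdot\mathbbm{1}[s_i\ \text{selected}]$ is still non-increasing in $b_i$ and that the stated $P_i$ is still its Myerson payment, which amounts to controlling how $s_i$'s rank in the sorted order and the cumulative expected payment up to that rank move with $b_i$ (using that $P_i(\cdot)$ is itself non-increasing). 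Making that interaction airtight is the crux; for the untruncated mechanism, Theorem~\ref{theorem:Archers} applies directly.
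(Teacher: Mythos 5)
Your verification of the three Archer--Tardos conditions is essentially the paper's own proof: the paper likewise cites the non-increasing allocation function for Condition 1, the explicit form of $P_i$ in Line 2 of Algorithm~\ref{alg:GPQM} for Condition 3, and for Condition 2 bounds the $-q\ln(1-q)$ part of the integrand by $-\ln(1-\overline{q})$ with $\overline{q}=1-\tau$ bounded away from $1$. One small remark on your Condition 2 argument: the integral is over the bid $x$, not over $q$, so "the logarithmic singularity integrates" needs $1-q_i(x)$ to vanish no faster than polynomially in $x$; the paper sidesteps this by keeping $q$ uniformly below $1-\tau$, which is the cleaner route given that $q_i\in[0,1)$ is already imposed.

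The budget-truncation issue in your last paragraph is genuine, and it is worth saying plainly that the paper's proof does not address it at all: it applies Theorem~\ref{theorem:Archers} to the nominal pair $(w_i,P_i)$ and silently ignores that Algorithm~\ref{alg:GPQM} zeroes out both the load and the payment of every owner past the cutoff $k$. Your underbidding scenario is exactly the right one, and it is more than "delicate" --- it is a concrete threat to IC. An owner who at her truthful bid sits just outside the funded prefix has utility $0$; by shading to $b_i=\theta_i-\epsilon$ she raises $q_i$, moves (weakly) earlier in the sort, and if this pulls her into the prefix her expected utility becomes $\int_{b_i}^{\overline{\theta}}w_i(x,b_{-i})\,dx-(\theta_i-b_i)\,w_i(b_i,b_{-i})$, which tends to $\int_{\theta_i}^{\overline{\theta}}w_i(x,b_{-i})\,dx>0$ as $\epsilon\to 0$. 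Closing the gap requires either showing that the realised load $\widehat w_i=w_i\cdot\mathbbm{1}[i\le k]$ is itself non-increasing in $b_i$ \emph{and} replacing $P_i$ by the Myerson payment associated with $\widehat w_i$ (which is not what Line 2 computes), or restricting to instances where prefix membership does not depend on $b_i$. Neither your proposal nor the paper does this, so both arguments as written establish IC and IR only for the untruncated mechanism; your proposal has the merit of naming the obstruction, but it leaves the crux unresolved.
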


\begin{proof}
We prove the lemma by Archer and Tardos' theorem \cite{archer2001truthful}. As GPQM employs a non-increasing allocation function, i.e., the allocation $q_i$ does not increase with respect to the bid $b_i$, and the expected payment $P_i$ is in the form specified in Line 2 of Alg.~\ref{alg:GPQM}, Conditions 1 and 3 are met. 

Now we show that Condition 2 is also met. 
We rewrite the integral as $\int^{\overline{\theta}}_0 q
_i(x)\ln{(1+q_i(x))} - q_i(x)\ln{(1-q_i(x))}dx$, where $q_i(x)$ is short of $q_i(x, b_{-i})$. 
 The former part is finite, so we focus on the latter part. We have 
\begin{align*}
& \int^{\overline{\theta}}_0 -q(x)\ln{(1-q(x))}dx \leq \int^{\overline{\theta}}_0 -\ln{(1-q(x))}dx \\
&\leq -x\ln{(1-\overline{q})}
\leq -\overline{\theta}\ln{(1-\overline{q})},   
\end{align*}
where $\overline{q}$ is the maximum value of $q(x)$ in its domain. As the range of $q$ is $[0,1)$, we have $\overline{q} = 1- \tau$, where $\tau$ is a very small positive real. We then have $-\ln{(1-\overline{q})} \leq -\ln \tau$, which is finite.


\end{proof}

\begin{theorem}
The greedy private query mechanism (GPQM) is IC, IR, BF and LDP.
\end{theorem}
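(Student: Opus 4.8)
The plan is to assemble the theorem from the pieces already established, handling each of the four properties in turn. LDP is immediate: GPQM is an instance of the Integrated PDQS framework and hence every data owner who is processed runs the Integrated Local Randomiser $\LR_I$, so by Lemma~\ref{lemma:lr} each such owner enjoys $\varepsilon_i$-LDP with $\varepsilon_i = \ln\frac{1+q_i}{1-q_i}$; owners who are not processed report nothing informative and trivially satisfy the bound. IC and IR are exactly Lemma~\ref{lem:GPQM_IC_IR}, so nothing new is needed there beyond citing it. The real content left to verify is budget feasibility.

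For BF, first I would observe that the expected payment of data owner $s_i$ in GPQM is $P_i = b_i w_i + \int_{b_i}^{\overline{\theta}} w_i(x,b_{-i})\,dx$ (Line 2 of Alg.~\ref{alg:GPQM}), and that the while-loop in Lines 5--8 processes data owners in descending order of $q_i$ and stops as soon as the running sum $\sum_{i=1}^k P_i$ would exceed $\beta$. The key point is that only the data owners actually processed inside the loop receive a nonzero payment --- for every other owner $p_i = 0$ is set in Line 9. Moreover, by construction of ILR, a processed owner $s_i$ receives actual payment $p_i = P_i/q_i$ with probability $q_i$ and $p_i = 0$ with probability $1-q_i$, so her \emph{expected} actual payment equals $q_i \cdot (P_i/q_i) = P_i$. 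Hence the total expected payment is $\sum_{i \in S} P_i$, where $S$ is the (random, but in GPQM deterministic given the bids) set of owners processed before the loop terminates, and the loop guard guarantees $\sum_{i \in S} P_i \le \beta$. This establishes \eqref{eqn:BF}.

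One subtlety I would be careful to address: the loop guard as written checks $\sum_{i=1}^k P_i \le \beta$ \emph{before} processing owner $k$, so the set of processed owners is exactly the largest prefix whose total expected payment is at most $\beta$; I would state explicitly that this prefix sum never exceeds $\beta$, which is the definition of BF here. I would also note that the expected-payment accounting above uses the tower property / linearity of expectation over the independent coin flips of the $\LR_I$ calls, and that $q_i < 1$ (guaranteed after Lemma~\ref{lemma:lr}) ensures $P_i/q_i$ is well-defined.

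The main obstacle is essentially bookkeeping rather than a deep argument: one must be precise that ``expected payment'' in the BF definition \eqref{eqn:BF} is $\sum_i p_i(\vec b) q_i(\vec b)$, reconcile this with the $P_i$ notation ($P_i = p_i q_i$ where $p_i = P_i/q_i$ on selection), and confirm that unprocessed owners contribute zero. There is no hard inequality to grind through --- the greedy stopping rule does all the work --- so I expect the proof to be short, mostly a matter of correctly threading the definitions together and citing Lemma~\ref{lem:GPQM_IC_IR} and Lemma~\ref{lemma:lr}.
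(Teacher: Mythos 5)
Your proposal is correct and follows essentially the same route as the paper, which simply cites Lemma~\ref{lem:GPQM_IC_IR} for IC and IR, notes that BF holds by construction of the greedy stopping rule, and derives LDP from the use of $\LR_I$ via Lemma~\ref{lemma:lr}. Your write-up is in fact more careful than the paper's one-line justification, particularly in reconciling the expected-payment bookkeeping ($P_i = q_i \cdot (P_i/q_i)$) with the BF definition~\eqref{eqn:BF}, but it adds no genuinely different idea.
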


The mechanism is BF by construction. As the ILR $\mathcal{L}_I$ is applied to perturb the private data, the mechanism satisfies LDP.


\section{The neural-based private query mechanism (NPQM)}
\label{sec: NPQM}

We introduce another instance that
implements the proposed framework. 
Unlike GPQM, this instance addresses \eqref{eqn:opt2} by learning the allocation function that approximates the total purchased privacy while satisfying the given constraints. To achieve this, we apply QMIX~\cite{rashid2020monotonic} and neural-network techniques to design and parameterise the allocation function, and learn its parameters using dual ascent algorithm~\cite{boyd2011distributed}. 
We refer to this instance as {\em {\bf n}eural-based {\bf p}rivate {\bf q}uery {\bf m}echanism} (NPQM).

\subsection{Allocation function design and parametrisation}


The allocation function in NPQM is designed as $q_i(\vec{b}) = \sigma(|w_2|(-|w_1|b_i+c)+d)$, where $\sigma$ denotes the sigmoid function, i.e., $\sigma(x) = \frac{e^x}{e^x+1}$, and $w_1, w_2, c, d$ are parameters. 

The parameters $w_1, w_2, c, d$ are generated by separate neural networks. Specifically, we use three separate three-layer neural networks to generate $w_1$, $w_2$, and $d$, respectively. The neural network is composed of three functions, which are:

\begin{itemize}
    \item A linear function $l^{(1)}\colon [\overline{\theta}, \underline{\theta}]^n \rightarrow \mathbb{R}^{h}$ takes the bids as input. It has the form $l^{(1)}(\vec{b})=A^{(1)}\vec{b} + k^{(1)}$ where $A^{(1)}$ and $k^{(1)}$ are coefficients of the linear function, and $h$ is a constant denoting the number of neural employed in Layer 2;
    \item A function $l^{(2)}\colon \mathbb{R}^{h} \rightarrow \mathbb{R}^{h}$ takes the result from Layer 1. In detail, $l^{(2)}(x)=x$ if $x \geq 0$; $l^{(2)}(x)=0$ otherwise. This is known as a ReLU function;
    \item A linear function $l^{(3)}\colon \mathbb{R}^h \rightarrow \mathbb{R}$ takes the the result from Layer 2. It has the form $l^{(3)}(\vec{b})=A^{(3)}\vec{b} + k^{(3)}$ where $A^{(3)}$ and $k^{(3)}$ are coefficients of the linear function.
\end{itemize}

The parameters $w_1,w_2,d$ are computed by $l^{(3)}(l^{(2)}(l^{(1)}(\vec{b})))$, which forms a function from $[\overline{\theta}, \underline{\theta}]^n$ to $\mathbb{R}$. Notice that although three parameters $w_1,w_2$, and $d$ are calculated by the same form of function, they have different values since the coefficients in $l^{(3)}$ and $l^{(1)}$ are different, e.g. $A^{(1)}$ for $w_1$ and $w_2$ can be different.

The parameter $c$ is directly computed by a single linear function $l^{(c)}: [\overline{\theta}, \underline{\theta}]^n \rightarrow \mathbb{R}$ with the form $l^{(c)}(\vec{b})=A^{(c)}\vec{b} + k^{(c)}$. Let $\mu = (w_1, w_2, c, d)$ denote all learnable parameters in the parameterised model, then we denote the parameterised allocation function as $q^\mu$. 

This type of function and neural network design has been previously utilised in QMIX \cite{rashid2020monotonic}, a well-known deep multi-agent reinforcement learning algorithm, which ensures the monotonicity between the global and the local value function. The details of QMIX are described in Appendix~\ref{appendix: QMIX}.

The optimal values of the coefficients in each layer will be trained and approximated by the Stochastic Gradient Descent (SGD) method. The performance of the neural network thus can be guaranteed by the proper settings of coefficients in each layer.

\subsection{Learning the parameters of the allocation function}

We employ dual ascent techniques~\cite{boyd2011distributed} to learn the parameter $\mu$ specified above and approximate the optimal solution to \eqref{eqn:opt2}. We begin with rewriting \eqref{eqn:opt2} with $\mu$ as the following.


\begin{align} \label{pro: primal}
    \max_{\mu}  &\sum^{n}_{i=1} q_i^{\mu}\ln{\frac{1 + q_i^{\mu}}{1 - q_i^{\mu}}} \\
    \text{s.t. }  & q_i^{\mu}(b_i', b_{-i}) \leq q_i^{\mu}(b_i, b_{-i}), \forall b_i'> b_i \label{eq:constraint2} \\
      & \sum^n_{i=1} \left(b_i q_i^{\mu}\ln{\frac{1 + q_i^{\mu}}{1 - q_i^{\mu}}} + \int^{\overline{\theta}}_{b_i} q^{\mu}(x) \ln{\frac{1 + q^{\mu}(x)}{1 - q^{\mu}(x)}}dx\right) - \beta \leq 0 \label{eq:cons_para}  \\
      & 0 \leq q_i^{\mu} < 1, \forall i \in [0, n]    \label{eq:constraint3}
\end{align}


%
%
%
Recall that $n$ is the number of data owners, and $q_i$ is the probability that data owner $s_i$ is selected by the system. 
The objective function aims to maximise the total expected privacy purchased by the system, where $\ln{\frac{1 + q_i^{\mu}}{1 - q_i^{\mu}}}$ is the privacy loss $\varepsilon_i$ of data owner $s_i$, if she is selected 
Constraint (\ref{eq:constraint2}) ensures that, for any data owner $s_i$, the allocation function is non-increasing with respect to her bid $b_i$, given other data owners' bids are fixed.
Constraint (\ref{eq:cons_para}) 
ensures that the total expected payment does not exceed the given budget. Constraint (\ref{eq:constraint3}) ensures that the probability of a data owner being selected is less than 1, thereby providing privacy protection for each data owner.

Constraint (\ref{eq:constraint2}) is satisfied since the parameterised model is enforced to be a monotonic function.
Constraint (\ref{eq:constraint3}) is also satisfied as we used the Sigmoid function, and the output is constrained between $0$ and $1$. Thus, we omit these two constraints in the following formulations. 

Next, we discuss how to update the parameters of the parameterised allocation function $q^\mu$ to satisfy the BF constraint (as shown in  (\ref{eq:cons_para})) 
and approximate the total purchased privacy (as shown in (\ref{pro: primal})) by the dual ascent algorithm~\cite{boyd2011distributed}. 

We establish the dual problem of the primal problem (\ref{pro: primal}). Let $\phi(\mu) = \sum^n_{i=1} q_i^{\mu}\ln{\frac{1 + q_i^{\mu}}{1 - q_i^{\mu}}}$ denote the objective function of the primal problem, representing the total expected purchased privacy of the system. Let $g(\mu)=\sum^n_{i=1} \left(b_i q_i^{\mu}\ln{\frac{1 + q_i^{\mu}}{1 - q_i^{\mu}}} + \int^{\overline{\theta}}_{b_i} q^{\mu}(x) \ln{\frac{1 + q^{\mu}(x)}{1 - q^{\mu}(x)}}dx\right) - \beta$ denote the difference between the total expected payment and the budget.
The standard form of the primal problem is

\vspace{-0.5cm}

\begin{align}
    \min_{\mu} & -\phi(\mu) \label{eq:minmu}\\ 
    \text{s.t. } & g(\mu) \leq 0 \label{eq:cons}
\end{align}


The Lagrangian is $L(\mu, \lambda) = -\phi(\mu) + \lambda g(\mu)$
, where $\lambda \geq 0$ is the Lagrangian multiplier. The dual function of the optimisation problem defined by~\eqref{eq:minmu}, \eqref{eq:cons} is $\psi(\mu, \lambda) = \text{inf } L(\mu, \lambda)$ 
which denotes the infimum of the Lagrangian. Then we can build the dual problem: 

\begin{equation}\label{prob: dual}
\begin{aligned} 
    &\max & \psi(\mu, \lambda) \\ 
    &\text{ s.t. }  &\lambda \geq 0
\end{aligned}
\end{equation}
\noindent It aims to obtain the greatest lower bound on the solution to the primal problem. With the dual problem, solving the optimisation problem becomes easier.

We then apply dual ascent techniques to approach the dual problem, updating the Lagrangian multiplier $\lambda$ by maximising $\psi(\mu, \lambda)$ and updating $\mu$ by minimising $L(\mu, \lambda)$ interchangeably by their gradients, as the following:
\begin{equation*}
    \lambda = \lambda + \alpha \nabla_\lambda L(\mu, \lambda),
\end{equation*}
\begin{equation*}
    \mu = \arg \min_{\mu} L(\mu, \lambda),
\end{equation*}
where $\alpha$ is the learning rate.

\begin{algorithm}
\begin{algorithmic}[1]
\caption{Dual Ascent}
\label{alg:train} \footnotesize
\Require Training set $\vec{b}'$, budget $\beta$, learning rate $\alpha$, number of episodes $T$
\Ensure Allocation function $q^\mu$
\State Set $t=0$, and initialise $\lambda$, $q^\mu$. Set $BF = False$
\While{$t \leq T$}
\State Generate allocation vector $(q_1, \cdots, q_n) = q^\mu(\vec{b}')$
\State Compute Lagrangian $L(\lambda, \mu)$ 
\State Let $\mu' = \arg \min_{\mu} L(\lambda, \mu)$
\If{BF constraint (\ref{eq:cons_para}) is satisfied}
    \State Set $\mu=\mu'$, $BF=True$
\EndIf
\State $\lambda = \lambda + \alpha \nabla_\lambda L(\mu, \lambda)$, $t = t + 1$
\EndWhile
\State Return $q^\mu$ if $BF$, NO ANSWER otherwise

\end{algorithmic}
\end{algorithm}

To be specific, the dual ascent, as shown in  Algorithm \ref{alg:train}, takes the training set, budget, learning rate and the number of episodes as input, where the training set is generated from the same distribution and has the same size as the real bids of data owners. The algorithm initialises $t$, $\lambda$ and $q^\mu$ (Line 1 of Alg.~\ref{alg:train}). Then for each episode, the algorithm passes the training set to the allocation function $q^\mu$ and generates the allocation vectors used to compute the Lagrangian (Line 3 of Alg.~\ref{alg:train}). The algorithm updates the parameters of  $q^\mu$ using the SGD method to find $\mu'$ that minimises the Lagrangian, which can be treated as the loss function in training $q^\mu$ (Line 4 of Alg.~\ref{alg:train}). Then it updates $\mu$ only if the allocation vector generated by $q^{\mu'}$ and the corresponding payment vector satisfy the BF constraint (Line 5 of Alg.~\ref{alg:train}), which ensures that the trained allocation function and the corresponding payment vector always satisfy the BF constraint. The algorithm updates $\lambda$ using the gradient of the (updated) Lagrangian (Line 6 of Alg.~\ref{alg:train}). Then it increments $t$ (Line 7 of Alg.~\ref{alg:train}) and continues the training until finishing all the episodes. Finally, the algorithm outputs a trained allocation function $q^\mu$.

After training the neural-based allocation function $q^\mu$, NPQM follows the proposed framework step by step (Line 1-7 of Alg.~\ref{alg:framework}), generating the allocation vector $(q_1, \cdots, q_n) = q^\mu(\vec{b})$ and computing the expected payments $(P_1, \cdots, P_n)$, where $P_i = b_i w_i(b_i) + \int^{\overline{\theta}}_{b_i} w_i(x) dx$, $w_i(b_i) = q_i(b_i)\ln{\frac{1+q_i(b_i)}{1-q_i(b_i)}}$. The ILR $\LR_I$, illustrated in Alg.~\ref{alg:LR}, is applied to each data owner $s_i$, computing the perturbed private data $t_i'$ and the compensation $p_i$. Finally, the query is answered based on the collected dataset $\{t_1', \cdots, t_n'\}$ and the payment vector and query answer are returned. The details are illustrated in Alg.~\ref{alg:NPQM}.


\begin{algorithm}
\begin{algorithmic}[1]
\caption{Neural-based Private Query Mechanism (NPQM)}
\label{alg:NPQM} \footnotesize
\Require Data owners $\vec{s}$, budget $\beta$, query $f$, allocation function $q$, training set $\vec{b}'$, learning rate $\alpha$, number of episodes $T$
\Ensure Payment vector $(p_1, \ldots, p_n)$, query answer $z$
\State  Train the allocation function $q = DualAscent(\vec{b}', \beta, \alpha, T)$
\State Generate the allocation vector $(q_1, \ldots, q_n) = q(\vec{b})$  
\State Generate the payment vector $(P_1, \ldots, P_n)$, where $P_i = b_i w_i(b_i) + \int^{\overline{\theta}}_{b_i} w_i(x) dx$
\For{each data owner $s_i$}
\State Get a random value and the payment $t'_i, p_i = \mathcal{L}_I(t_i, q_i, P_i)$
\EndFor
\State Compute the query answer $z = f(t'_1, \ldots, t'_n)$
\State Return $(p_1, \ldots, p_n), z$ 
\end{algorithmic}
\end{algorithm}


\begin{lemma}
    NPQM that employs $q^\mu$ as the allocation function is IC and IR.
    \label{lem:NPQM_IC_IR}
\end{lemma}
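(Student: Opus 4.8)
The plan is to invoke Archer and Tardos' theorem (Theorem~\ref{theorem:Archers}) just as was done for GPQM in Lemma~\ref{lem:GPQM_IC_IR}, since NPQM also uses the expected-payment formula $P_i = b_i w_i(b_i) + \int^{\overline{\theta}}_{b_i} w_i(x)\,dx$ from Line~3 of Alg.~\ref{alg:NPQM}, which is exactly Condition~3. So it suffices to verify Conditions~1 and~2 for the neural-based allocation function $q^\mu$.

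For Condition~1, I would argue that $w_i(b_i, b_{-i}) = q_i^\mu \ln\frac{1+q_i^\mu}{1-q_i^\mu}$ is non-increasing in $b_i$. First, the QMIX-style construction forces $q_i^\mu(\vec b) = \sigma(|w_2|(-|w_1|b_i + c) + d)$ to be non-increasing in $b_i$: the absolute values on the weights guarantee the inner map $b_i \mapsto |w_2|(-|w_1| b_i + c) + d$ has non-positive slope, and $\sigma$ is monotone increasing, so the composition is non-increasing (this is exactly the monotonicity property QMIX is designed to enforce, cited already and detailed in Appendix~\ref{appendix: QMIX}). Second, the map $x \mapsto x\ln\frac{1+x}{1-x}$ is increasing on $[0,1)$ (its derivative is $\ln\frac{1+x}{1-x} + \frac{2x}{1-x^2} \ge 0$ there), so composing a non-increasing $q_i^\mu$ with this increasing function yields a non-increasing $w_i$. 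That settles Condition~1.

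For Condition~2, I would bound $\int^{\overline{\theta}}_0 w_i(x, b_{-i})\,dx < \infty$ by reusing the estimate from the proof of Lemma~\ref{lem:GPQM_IC_IR} almost verbatim: split $w_i = q_i\ln(1+q_i) - q_i\ln(1-q_i)$; the first term is bounded on the finite interval $[0,\overline{\theta}]$ since $q_i \in [0,1)$; for the second term, because the sigmoid output is bounded strictly below~$1$ — say $q_i^\mu \le 1-\tau$ for some $\tau > 0$ — we get $-\ln(1-q_i^\mu) \le -\ln\tau < \infty$, hence the integral is at most $-\overline{\theta}\ln\tau$. The only subtlety compared to GPQM is justifying that the sigmoid output is uniformly bounded away from~$1$; since $|w_1|,|w_2|,c,d$ are outputs of fixed (trained) finite neural networks evaluated on the compact domain $[\underline{\theta},\overline{\theta}]^n$, the exponent in $\sigma$ is bounded above, so $q_i^\mu$ is bounded below~$1$ by a constant. (The paper has in fact already imposed $q_i \in [0,1)$ globally after Lemma~\ref{lemma:lr}, so this can also simply be cited.)

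The main obstacle — or rather the one point requiring care — is Condition~1, specifically making the monotonicity-of-$q^\mu$ argument rigorous: one must be explicit that it is the $|\cdot|$ on $w_1$ and $w_2$ (not the unconstrained neural-net outputs) together with the sign convention $-|w_1| b_i$ that delivers non-increasingness in each $b_i$, and that this holds with $b_{-i}$ held fixed even though $w_1, w_2, c, d$ themselves depend on the full bid vector $\vec b$ — but since we only vary $b_i$, and the claimed monotonicity is in $b_i$ while those parameters' dependence on $b_i$ is \emph{not} controlled, I would actually note that the design intends $w_1, w_2, c, d$ to be computed from $\vec b$ and the monotonicity guarantee of QMIX is precisely that $q_i^\mu$ is monotone in the per-agent input regardless; I would cite Appendix~\ref{appendix: QMIX} for the formal statement rather than re-deriving it. Everything else is a routine transcription of the GPQM proof.
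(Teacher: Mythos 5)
Your proposal is correct and follows essentially the same route as the paper's own proof: invoke Archer--Tardos, get Condition~3 from the payment formula, Condition~1 from the $\sigma(|w_2|(-|w_1|b_i+c)+d)$ structure, and Condition~2 by reusing the GPQM integral bound together with the sigmoid output being bounded away from $1$. The subtlety you flag --- that $w_1,w_2,c,d$ are themselves functions of the full bid vector $\vec{b}$, so the sign constraints alone do not immediately give monotonicity in $b_i$ --- is a real one, but the paper's proof glosses over it in exactly the same way, simply asserting that monotonicity is guaranteed by enforcing $-|w_1|$ and $|w_2|$.
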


\begin{theorem}
    The neural-based private query mechanism (NPQM) is IC, IR, BF and $\varepsilon_i$-LDP.
    \label{thm:NPQM_IC_IR_BF}
\end{theorem}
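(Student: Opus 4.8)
The plan is to treat the four properties separately, reducing the statement to one substantive verification (IC/IR) and three short arguments. \textbf{IC and IR} are precisely Lemma~\ref{lem:NPQM_IC_IR}, so the theorem's proof need only cite it; the genuine work sits in that lemma, which I would establish exactly as in Lemma~\ref{lem:GPQM_IC_IR} by checking the three conditions of the Archer--Tardos characterisation (Theorem~\ref{theorem:Archers}) for the expected privacy loss $w_i = q_i\ln\frac{1+q_i}{1-q_i}$ with $q_i = q_i^\mu(\vec b)$. Condition~3 (the payment form $P_i = b_i w_i + \int_{b_i}^{\overline\theta} w_i(x,b_{-i})\,dx$) holds by construction, since this is exactly Line~3 of Alg.~\ref{alg:NPQM}. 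For Condition~1, because $q\mapsto q\ln\frac{1+q}{1-q}$ is increasing on $[0,1)$, it is enough that $q_i^\mu$ is non-increasing in $b_i$; this is the point of the QMIX-style design, since $b_i$ enters $q_i^\mu = \sigma\!\left(|w_2|(-|w_1|b_i+c)+d\right)$ only through the term $-|w_1|b_i$ multiplied by the non-negative factor $|w_2|$, and $\sigma$ is monotone, so increasing $b_i$ can only decrease the argument of the sigmoid. For Condition~2, I would reuse the estimate in the proof of Lemma~\ref{lem:GPQM_IC_IR}: writing $w_i = q_i^\mu\ln(1+q_i^\mu) - q_i^\mu\ln(1-q_i^\mu)$, the first term is bounded, and $\int_0^{\overline\theta} -q_i^\mu(x)\ln(1-q_i^\mu(x))\,dx \le -\overline\theta\ln(1-\overline q)$ where $\overline q = \sup_x q_i^\mu(x,b_{-i})$; because $\sigma$ never attains $1$ and the bid interval is compact, $\overline q < 1$ and the bound is finite.

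\textbf{LDP} follows directly from the framework: NPQM applies the Integrated Local Randomiser $\mathcal{L}_I$ (Alg.~\ref{alg:LR}) with probability $q_i^\mu$, and the sigmoid output lies in $(0,1)\subseteq[0,1)$, so Lemma~\ref{lemma:lr} yields $\varepsilon_i$-LDP with $\varepsilon_i = \ln\frac{1+q_i^\mu}{1-q_i^\mu}$ for each data owner. \textbf{BF} is guaranteed by the structure of the dual-ascent training (Alg.~\ref{alg:train}): the loop commits an update to $\mu$ only when the resulting allocation together with the expected payments $P_i = b_i w_i + \int_{b_i}^{\overline\theta} w_i(x)\,dx$ satisfies~\eqref{eq:cons_para}, and the procedure returns NO ANSWER unless some admissible update has occurred; hence whenever NPQM outputs a payment vector it satisfies $\sum_i P_i \le \beta$, i.e.~\eqref{eqn:BF}. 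Combining the four gives the theorem.

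I expect the main obstacle to be making the monotonicity step (Condition~1) rigorous. The QMIX monotonicity guarantee only applies when the coefficients $w_1, w_2, c, d$ are held fixed while $b_i$ varies, i.e.\ when these are produced from $\vec b_{-i}$ alone (the analogue of QMIX's global state), rather than from the full bid vector $\vec b$; so I would either constrain the hypernetwork inputs to $\vec b_{-i}$ or argue directly that $b_i \mapsto q_i^\mu(b_i,b_{-i})$ is monotone --- otherwise ``the network is monotone in the argument $b_i$'' does not by itself deliver Condition~1, and IC/IR could fail. A secondary point I would flag explicitly is the gap between the training bids $\vec b'$ and the deployed bids $\vec b$: BF is enforced on $\vec b'$ inside Alg.~\ref{alg:train}, so to transfer it to $\vec b$ one must assume $\vec b'$ and $\vec b$ come from the same distribution (or simply re-verify~\eqref{eq:cons_para} at deployment). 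The remaining steps are routine.
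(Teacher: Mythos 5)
Your proposal follows essentially the same route as the paper: IC and IR are delegated to Lemma~\ref{lem:NPQM_IC_IR}, which the paper proves by verifying the three Archer--Tardos conditions exactly as you describe (monotonicity from the $-|w_1|$, $|w_2|$ construction, the payment form by construction, and finiteness of the integral via the same bound as in Lemma~\ref{lem:GPQM_IC_IR}), while $\varepsilon_i$-LDP follows from the ILR and BF from the update rule in Algorithm~\ref{alg:train}. The two caveats you flag --- that the coefficients $w_1,w_2,c,d$ are produced from the full bid vector $\vec b$ so architectural monotonicity in $b_i$ is not automatic, and that BF is enforced on the training bids $\vec b'$ rather than the deployed bids $\vec b$ --- are genuine gaps that the paper's own proof also leaves unaddressed.
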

See Appendix~\ref{Appendix:PRF} 
for the proof of Lemma~\ref{lem:NPQM_IC_IR} and Theorem~\ref{thm:NPQM_IC_IR_BF}.

\section{Experiments}
\subsection{Experiment setup}
In experiments, we aim to evaluate the accuracy of GPQM and NPQM when applied to various query types and datasets. The experiment setup is summarised in Table \ref{tab:setup}.

\begin{table}
    \centering
    \caption{Experiment Setups}
    \label{tab:setup} \footnotesize 
    \begin{tabular}{cc}
    \toprule
        Query $f$ & Count, median \\
        Dataset $D$ & Obesity, Maternal, Exam, Students, Salaries, Customers \\
        Budget $\beta$ & $\{0.1n, \cdots, 0.9n\}$ \\
        Bids $\vec{b}$ & Drawn from $U(0,1)$ \\
        Mechanism & GPQM (linear, log, exp), NPQM, FQ mechanisms\\
    \bottomrule
    \end{tabular}
\end{table}

\textbf{Datasets}: We use six real-world datasets, Obesity~\cite{palechor2019dataset}, Maternal Health Risk (Maternal)~\cite{ahmed2020iot}, Exam~\cite{kaggle_students_performance}, Students' Dropout and Academic Success (Students)~\cite{martins2021early}, Data Science Salaries 2023 (Salaries)~\cite{kaggle_data_science_salaries} and Customer Personality Analysis (Customers)~\cite{kaggle_customer_personality}, as the private data of data owners. See more details in App.~\ref{app:datasets}.

\textbf{Bids $\vec{b}$}: The bids of data owners are generated from the uniform distribution $U(0,1)$. In words, the range of bids $(\underline{\theta}, \overline{\theta}) = (0, 1)$.

\textbf{Budget}: The budget of consumer is set to be $0.1\overline{\theta}n$ to $0.9\overline{\theta}n$.

\textbf{Queries}: We analyse count and median queries. 
For count queries, we utilise the obesity level from Obesity, the risk level from Maternal, the test preparation status from Exam, the scholarship holder from Students, the remote working ratio from Salaries, and the complaint of Customers. Specifically, we count the number of overweight individuals, high-risk pregnant females, students who prepared for the exams, students who hold a scholarship, people who work remotely, and customers who made complaints.

For median queries, we use the age of Obesity and Maternal, and the reading test score of Exam, the admission test score of Students, the salary in US dollars of Salaries, and in-store shopping of Customers. Specifically, we query the median age of individuals in Obesity and Maternal, the median score in the reading test, the median salary in US dollars, and the median number of in-store shopping.


\textbf{Mechanims}: We evaluate five mechanisms: three implementations of GPQM, NPQM, and FairQuery (FQ)~\cite{ghosh2011selling}. For GPQM, we consider linear, logarithmic, and exponential allocation functions, denoted by $q(\vec{b}) = 1 - \vec{b}$, $q(\vec{b}) = -\log(k_l q(\vec{b}))$, and $q(\vec{b}) = e^{-k_e q(\vec{b})}$, where the coefficients $k_l, k_e$ are randomly selected from the uniform distribution $U(0,10)$ for each query. The implementation details of NPQM are shown in Appendix~\ref{app:NPQM}.


We make modifications to the FQ mechanism~\cite{ghosh2011selling} to ensure LDP for count and median queries, and we consider this modified version as our benchmark (see Appendix~\ref{app:FQ}). 
In the modified FQ, the procurement process remains unchanged such that the data owners are sorted in descending order based on the value of $b_i$, and the mechanism determines the last data owner $k$ who can be selected within the given budget. 
Additionally, we introduce modifications to the query process by incorporating a widely used local randomiser {\em randomised response} \cite{warner1965randomized} that allows data owners to report their true data with a probability of $q_i$ and report a random value otherwise.



\textbf{Experiments:} For each experiment, we generate bids of data owners $\vec{b}$ randomly from the uniform distribution $U(0,1)$. We conduct $m=100$ queries for each mechanism and budget. For count queries, the performance of mechanisms is measured under average relative absolute error (RAE), i.e., $\frac{1}{m}\sum^m_{i=1} \frac{|z-z_g|}{z_g}$, where $z$ denotes the query answer generated by mechanisms and $z_g$ denotes the ground truth. For median queries, the performance is measured under average absolute error (AE), i.e., $\frac{1}{m}|z-z_g|$.

\textbf{Implementation details:} NPQM is implemented in Python 3.9 on NVIDIA GeForce RTX 3090 Ti GPU. All mechanisms are implemented in Python 3.9 on Apple M1 Pro CPU. The code is available on {\tt \url{ https://anonymous.4open.science/r/IntegratedPDQS}}


\subsection{Results and discussion}

\begin{figure}
    \centering
    \includegraphics[width=0.95\linewidth]{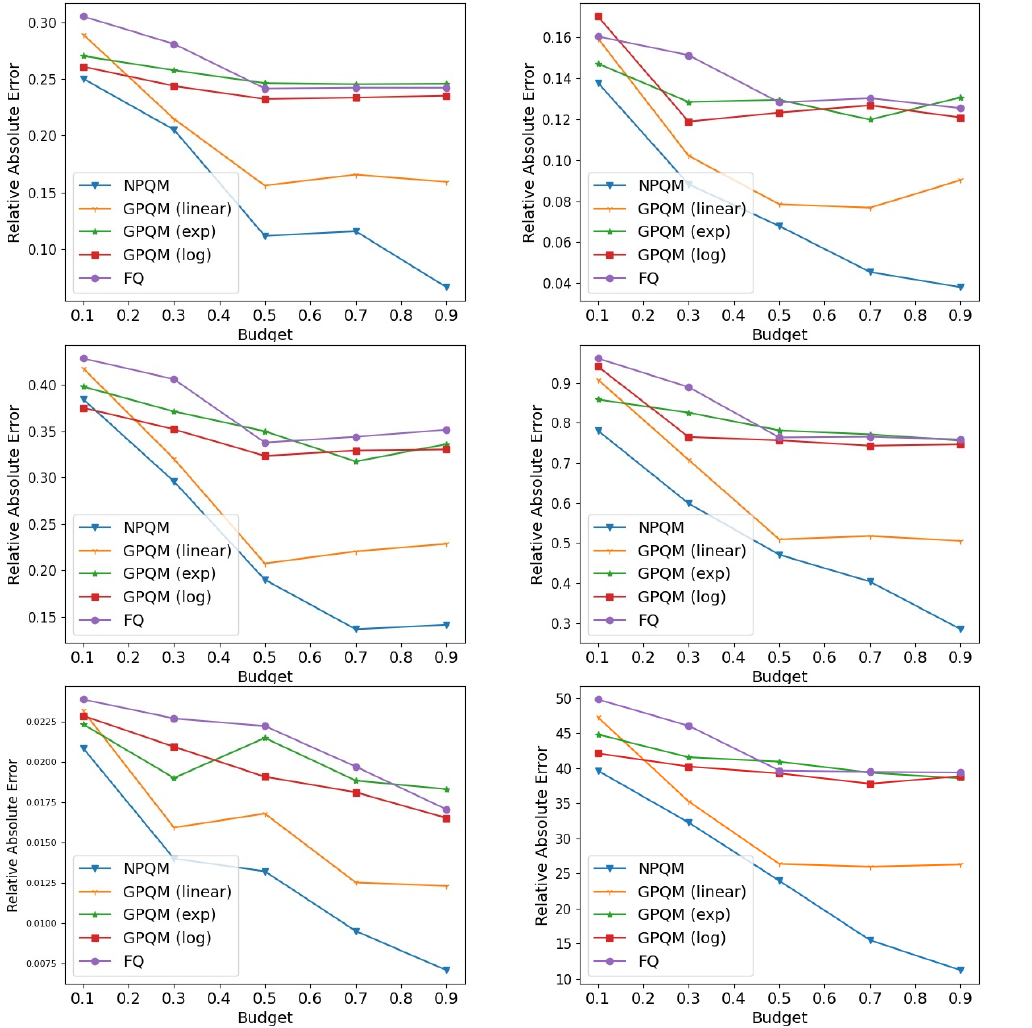}
    \caption{Relative absolute error of count queries on the Obesity (top-left), Maternal (top-right), Exam (middle-left), Students (middle-right), Salaries (bottom-left) and Customers (bottom-right) datasets}
    \label{exp:accurayCount}
\end{figure}

\begin{figure}
    \centering
    \includegraphics[width=0.95\linewidth]{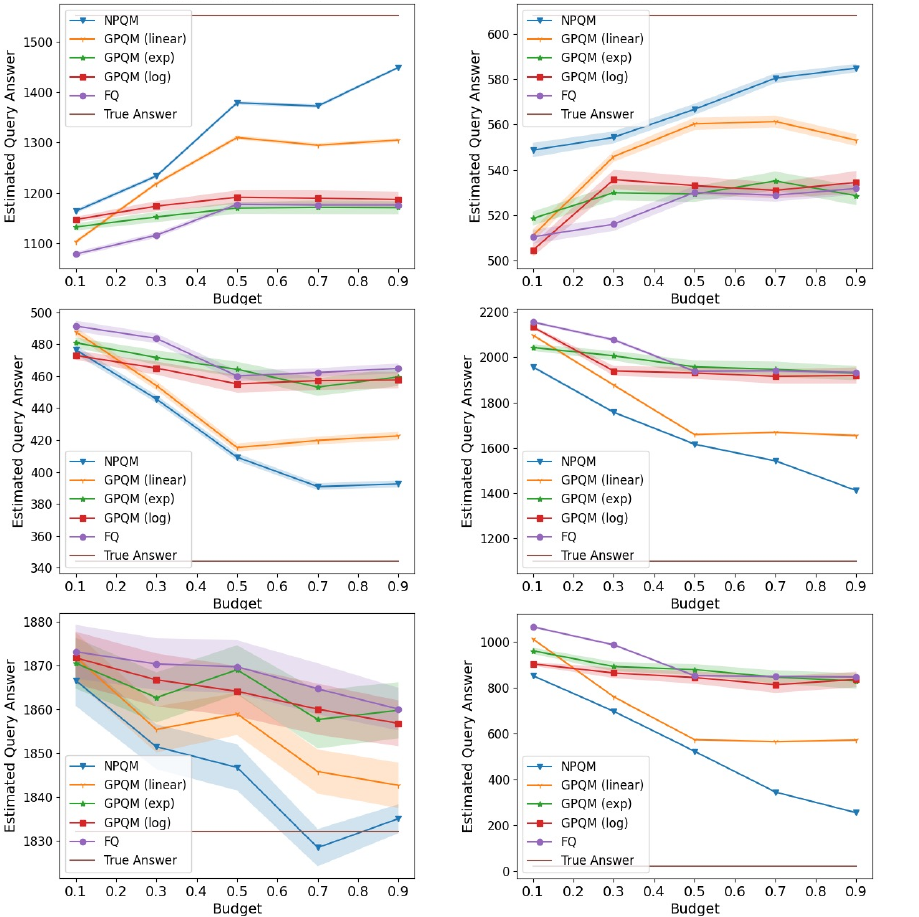}
    \caption{Confidence interval of the estimated answer of count queries on the Obesity (top-left), Maternal (top-right), Exam (middle-left), Students (middle-right), Salaries (bottom-left) and Customers (bottom-right) datasets}
    \label{exp:CIcount}
\end{figure}

We evaluate and compare the performance of the five mechanisms. Fig~\ref{exp:accurayCount}-\ref{exp:CIcount} display the RAE and confidence intervals of the mechanisms for count queries on the six datasets, respectively. 
Fig~\ref{exp:accurayMedian}-\ref{exp: CI median} illustrate the AE and confidence intervals of the mechanisms for median queries on the six datasets, respectively. 

In most experiments, our approach outperforms the benchmark. NPQM achieves the best accuracy on both count and median queries, with the lowest RAE / AE and the estimated query answer closest to the ground truth. GPQM with a linear allocation function also performs well. GPQM with exponential and logarithmic allocation functions outperforms the benchmark in most experiments when the budget is below $0.5n$, but this advantage becomes less noticeable when the budget exceeds $0.5n$.

As expected, the performance of NPQM and GPQM with linear allocation improves as the budget increases. The advantages of NPQM and GPQM with linear allocation become more significant compared to other mechanisms as the budget increases, indicating that these mechanisms can approximate the ground truth with a sufficient budget. 
On the other hand, for GPQM with exponential and logarithmic allocation and FQ, their accuracy does not improve further as the budget reaches $0.5n$. 


The experimental results demonstrate that NPQM consistently performs better and provides accurate estimates for various query types and datasets. NPQM also performs well when bids follow a normal distribution, and even when the training and test sets come from different distributions (see Appendix~\ref{app:exp result}). Despite the potential budget overruns from using expected payment as a training constraint, NPQM remains an accurate solution, as such occurrences are rare and can be resolved through system re-runs. 
Besides, we observe that GPQM's performance relies on the choice of the allocation function. As for our benchmark, it suffers from limitations as it needs to allocate a portion of the budget to data owners who cannot truthfully report their private data, resulting in compromised accuracy even with a sufficient budget.

\section{Conclusion}
We introduce an integrated PDQS framework, which combines the procurement and query processes, and effectively utilises the consumer's budget to approximate query accuracy under LDP. We propose two implementations of the novel framework, GPQM and NPQM, which address queries while considering IC, IR, and BF constraints. The experimental results demonstrate that our mechanisms outperform existing approaches that separate the procurement and query processes in query accuracy. Potential future work can be extending the Integrated PDQS framework to handle sequential data and multi-dimensional private data with varying privacy requirements for different dimensions.


\ack This work is partially supported by National Natural Science Foundation of China No. 62172077.

\bibliography{ecai.bib}

\clearpage

\appendix
\section{FairQuery}
\label{app:FQ}

\begin{algorithm}
\begin{algorithmic}[1]
\caption{FairQuery of local DP}
\label{alg:FQLDP}
\Require Data owners $\vec{s}$, Budget $\beta$, Query $f$
\Ensure Allocation vector $(q_1, \ldots, q_n)$, 
Payment vector $(p_1, \ldots, p_n)$, Query answer $z$
\State Sort data owners such that $b_1 \leq b_2 \leq \cdots \leq b_n$
\State Let $k$ be the largest integer such that $\frac{b_k}{n-k} \leq \frac{\beta}{k}$
\State Pay each $i>k$ $p_i=0$ and each $i \leq k$ $p_i = \min(\frac{\beta}{k}, \frac{b_{k+1}}{n-k})$
\For{$i \leq k$}
\State Set $q_i = \frac{e^{\varepsilon}-1}{e^{\varepsilon} + 1}$, where $\varepsilon = \frac{1}{n-k}$
\State Report $t_i' = t_i$ with probability $q_i$, report a random value $t_i' \in \mathcal{D}$ otherwise
\EndFor
\For{$i > k$}
\State Report a random value $t_i' \in \mathcal{D}$
\EndFor
\State Return $(p_1, \ldots, p_n), z$
\end{algorithmic}
\end{algorithm}

\section{QMIX}
\label{appendix: QMIX}
QMIX considers cooperation in a multi-agent problem, in which the single agent's estimation of benefit should conform to the team's benefit. Intuitively, QMIX applies a function that takes each agent's estimation as input, and calculates the team's estimation as output. Thus requiring the monotonicity of the function can guarantee the conformity. 

The cooperative environment is defined as a Dec-POMDP $G=(S,U,P,O,Z,r,n,\gamma)$. $s \in S$ describes the state. Each agent $a \in A$ chooses an action $u^a \in U^a$ and forms the joint action $u \in U$. The dynamic of the system is controlled by the transition function $P(s'|s,u):S \times S \times U \rightarrow [0,1]$. $r(s,u): S \times U \rightarrow \mathbf{R}$ describes the reward of all agents, and $\gamma \in [0,1)$ is a discount factor. Here we consider the partial observable situation, in which each agent can only observe part of the environment state $z \in Z$ according to the function $O(s,a): S \times A \rightarrow Z$. Each agent also maintains an observation history $\tau^a \in T \equiv (Z \times U)^*$, and the utility function $Q_a(\tau,u_t)=\mathbf{E}_{s_{t+1}:\infty,u_{t+1}:\infty[R_t|\tau,u_t]}$ that estimates the expectation of future rewards, and $R_t=\sum_{i=0}^\infty \gamma^ir_{t+i}$ is the discounted reward. Similarly, we want a $Q_{tot}(\tau,u_t)$ as a global utility function. So, the cooperation is considered as a monotonicity between $Q_{tot}$ and $Q_a$: $\frac{\partial Q_{tot}}{\partial Q_a} \geq 0, \forall a \in A$. The monotonicity is achieved by a neural network that takes $Q_a$ as input, and $Q_{tot}$ as output. Since the network guarantees monotonicity, which is one condition in our framework as well. We apply a similar neural network, and will illustrate our neural network below.

\section{Proof of Lemma~\ref{lem:NPQM_IC_IR} and Theorem~\ref{thm:NPQM_IC_IR_BF}}\label{Appendix:PRF}

{\bf Lemma~\ref{lem:NPQM_IC_IR}} {\it  NPQM that employs $q^\mu$ as the allocation function is IC and IR.}  

\medskip

\begin{proof}
Similar to Lemma~\ref{lem:GPQM_IC_IR}, we prove lemma~\ref{lem:NPQM_IC_IR} by Archer and Tardos' theorem~\cite{archer2001truthful}.  The monotonicity of the function $q^\mu$ is guaranteed by enforcing the negation of the absolute value of $w_1$ and the absolute value of $w_2$, and the expected payment $P_i$ is structured in the same form as in GPQM. Thus, Conditions 1 and 3 are met.
The range of the function is guaranteed to be $(0,1)$, by applying the Sigmoid function, which is a common normalisation technique that maps any inputs to values between 0 and 1. Similar to the previous proof of lemma~\ref{lem:GPQM_IC_IR}, we can show that the integral is finite and Condition 2 is also met.
\end{proof}

\medskip

\noindent {\bf Theorem~\ref{thm:NPQM_IC_IR_BF}}
{\it The neural-based private query mechanism (NPQM) is IC, IR, BF and $\varepsilon_i$-LDP.}

\medskip

\begin{proof} 
The training algorithm employed by NPQM updates the parameter $\mu$  of the allocation function only if the corresponding allocation probabilities and expected payments satisfy the BF constraint, so the BF constraint is guaranteed by the construction of NPQM. Also, NPQM employs the ILR $\LR_I$ that ensures $\varepsilon_i$-LDP. Together with lemma~\ref{lem:NPQM_IC_IR}, NPQM is IC, IR, BF and $\varepsilon_i$-LDP.
\end{proof}

\section{Experiment Details}

\subsection{Dataset Description}
\label{app:datasets}

The Obesity dataset comprises 2111 records of individuals from South America, detailing their dietary habits, physical condition, and estimated levels of obesity. The Maternal dataset documents the physical condition of 1014 pregnant women, including data such as age, blood pressure, and blood sugar, which can impact maternal mortality. The Exam dataset contains personal information and exam performance of 1000 American high school students. The Students dataset provides information on the personal details and academic performance of 4424 undergraduate students. The Salaries dataset records the personal information and income of 3755 individuals working in the field of data science. The Customers dataset describes the personal details and shopping information of 2240 customers.

\subsection{NPQM Settings}
\label{app:NPQM}
We generate the bids in the training and test sets for NPQM.
We use four different setups to verify the robustness of our NPQM. {\bf (1).} Both the training and test sets are randomly sampled from the uniform distribution $U(0,1)$. {\bf (2).} The training set is drawn from $U(0,1)$, while the test set are drawn from a normal distribution (std=0.25) with all values mapped to $(0, 1)$. {\bf (3).} The training and test sets are sampled from the standard normal distribution with all values mapped to $(0, 1)$. {\bf (4).} The training set is generated as in Experiment (3), while the samples of the test set are drawn from a normal distribution (std=0.25) with all values mapped to $(0, 1)$.


The dual ascent algorithm is executed for a total of $T=5000$ epochs, utilising a hidden layer with size $h=8$. The initial learning rate for the stochastic gradient descent (SGD) method is set to 0.005, gradually decreasing to $0.001$ towards the end. The learning rate for the Lagrangian multiplier $\lambda$ remains fixed at $0.005$.

\section{Experiment Results}
\label{app:exp result}

Fig.~\ref{exp:accurayMedian}-\ref{exp: CI median} demonstrate the AE and confidence intervals of the mechanisms for median queries on the six datasets, respectively. 

Figure~\ref{exp:CIcount}-\ref{exp:diff_normal} demonstrate the results for different setups regarding the training and test sets. Figures~\ref{exp:CIcount}-\ref{exp: CI median} demonstrate the accuracy for Experiment (1), while Figures~\ref{exp:normal_uniform}, \ref{exp:normal} and \ref{exp:diff_normal} demonstrate the accuracy of NPQM and FQ mechanisms on Obesity and Exam datasets for Experiment (2), (3) and (4), respectively.

\begin{figure}
    \centering
\includegraphics[width=0.95\linewidth]{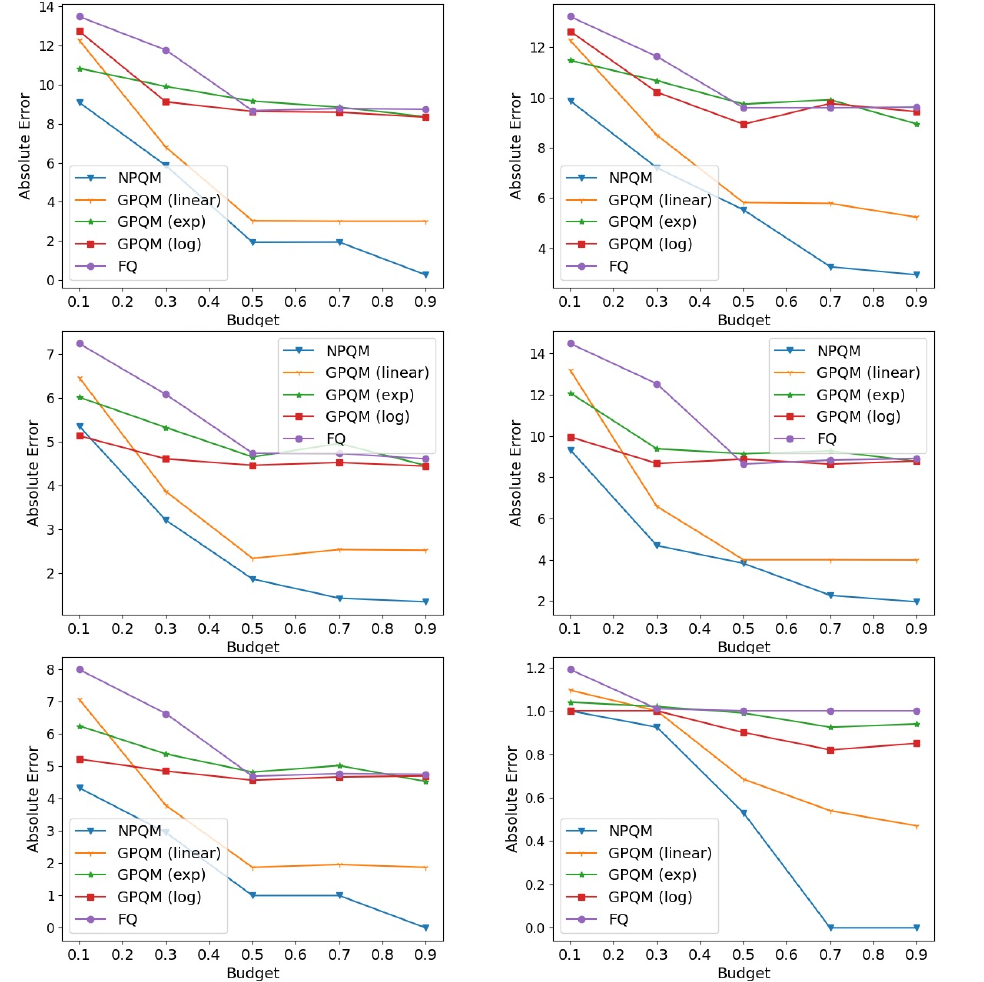}
    \caption{Absolute error of median queries on the Obesity (top-left), Maternal (top-right), Exam (middle-left), Students (middle-right), Salaries (bottom-left) and Customers (bottom-right) datasets}
    \label{exp:accurayMedian}
\end{figure}

\begin{figure}
    \centering
\includegraphics[width=0.95\linewidth]{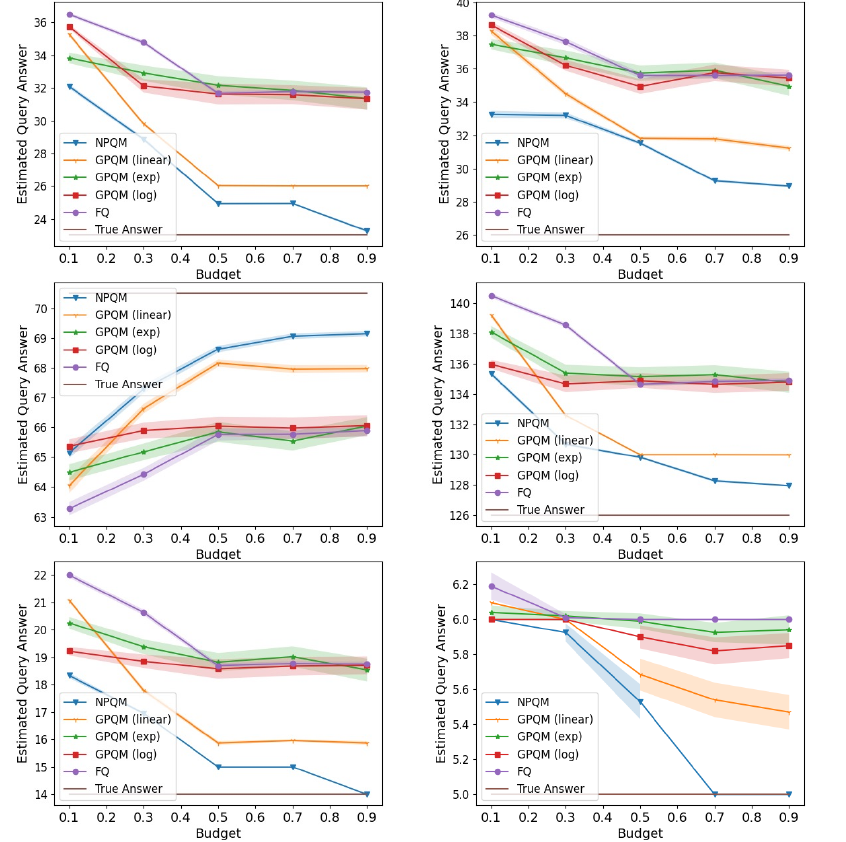}
    \caption{Confidence interval of the estimated answer of median queries on the Obesity (top-left), Maternal (top-right), Exam (middle-left), Students (middle-right), Salaries (bottom-left) and Customers (bottom-right) datasets}
    \label{exp: CI median}
\end{figure}

\begin{figure}
    \centering
\includegraphics[width=0.95\linewidth]{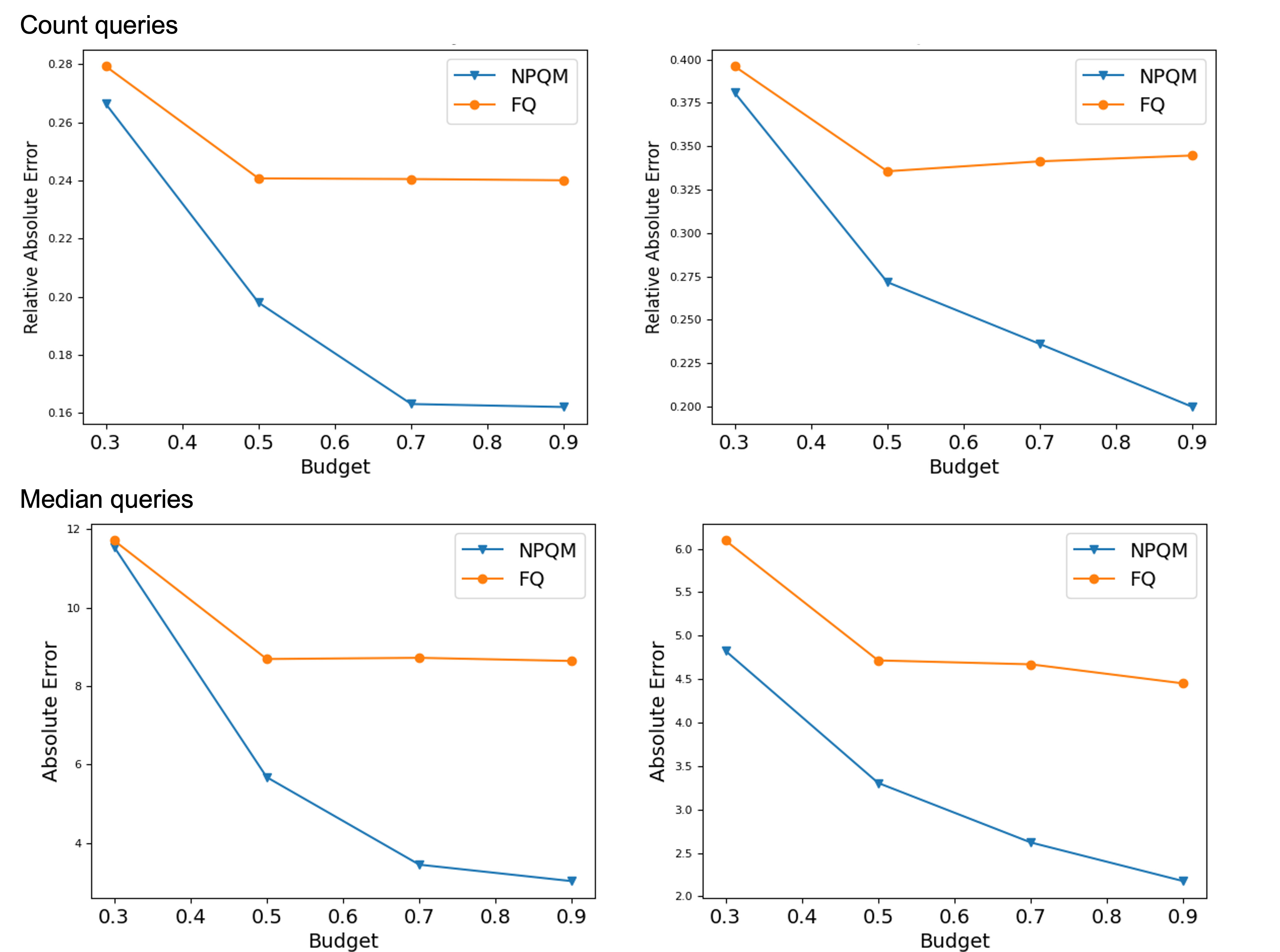}
    \caption{Relative absolute error of count and median queries on the Obesity (left) and Exam (right) datasets, where the training set of NPQM is under uniform distribution while the test set follows a normal distribution (std=0.25) with all values mapped to the range of 0 to 1}
    \label{exp:normal_uniform}
\end{figure}

\begin{figure}
    \centering
\includegraphics[width=0.95\linewidth]{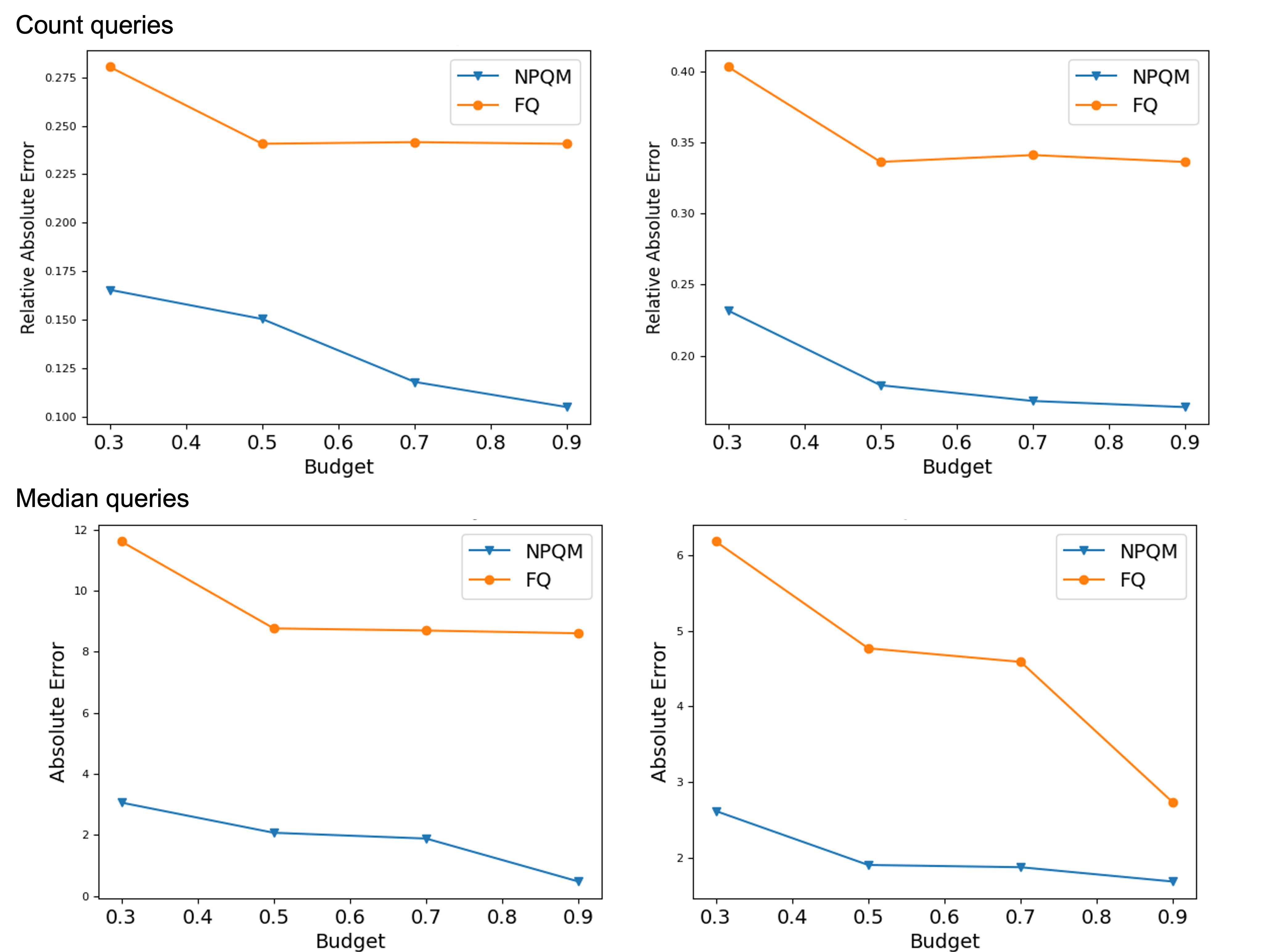}
    \caption{Relative absolute error of count and median queries on the Obesity (left) and Exam (right) datasets, where the training and test sets follow a standard normal distribution with all values mapped to the range of 0 to 1}
    \label{exp:normal}
\end{figure}


\begin{figure}
    \centering
\includegraphics[width=0.95\linewidth]{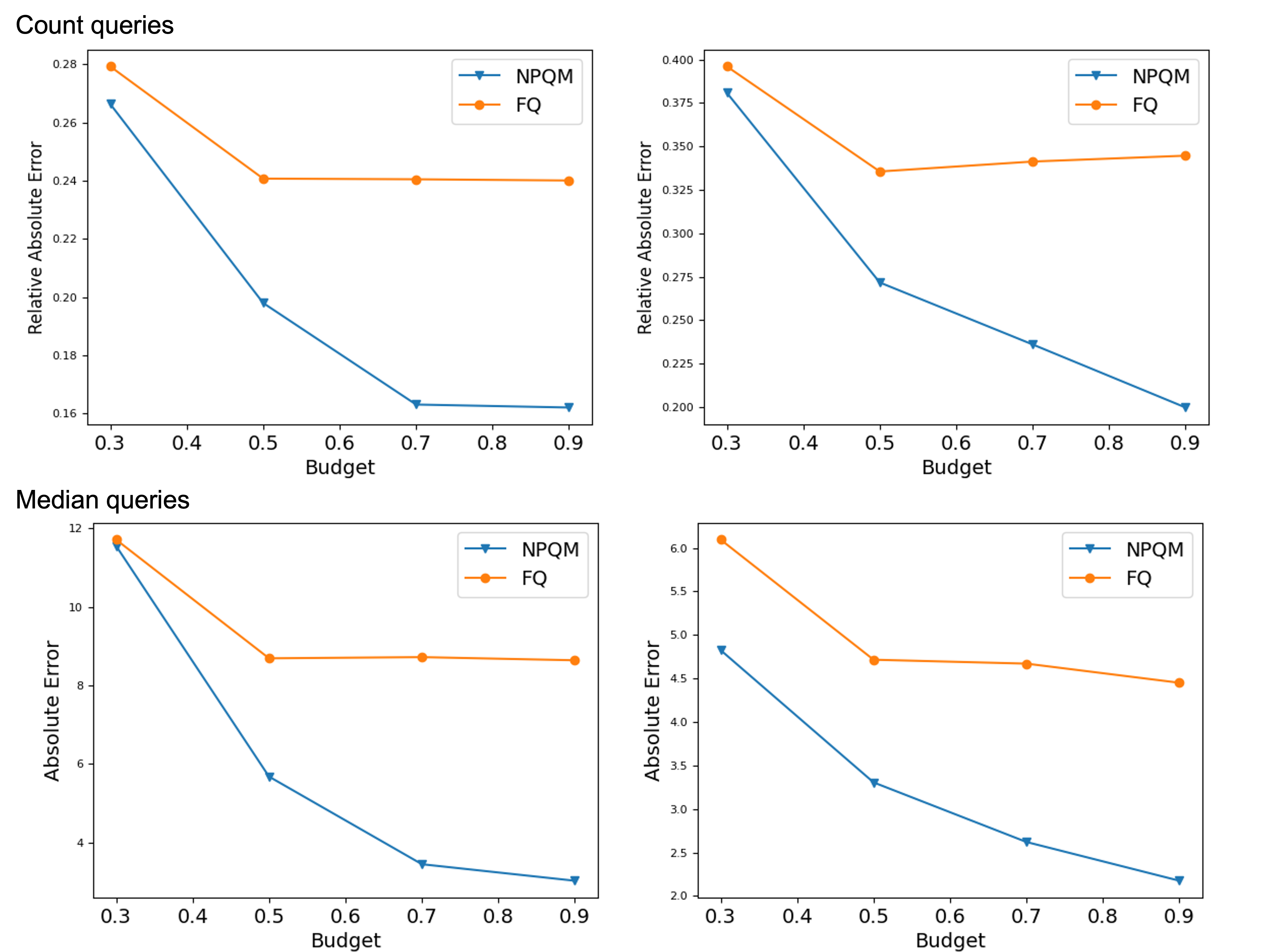}
    \caption{Relative absolute error of count and median queries on the Obesity (left) and Exam (right) datasets, where the training set is under a standard normal distribution and the test set follows a normal distribution (std=0.25) with all values mapped to the range of 0 to 1}
    \label{exp:diff_normal}
\end{figure}

\end{document}